\def\thmhead@plain#1#2#3{%
  \thmname{#1}\thmnumber{\@ifnotempty{#1}{ }\@upn{#2}}%
  \thmnote{ {\the\thm@notefont#3}}}
\let\thmhead\thmhead@plain
\newcounter{AppCounter}
\def\restrict#1{\raise-.5ex\hbox{\ensuremath|}_{#1}}
\newtheorem{lemma}{Lemma}[section]
\newtheorem{proposition}[lemma]{Proposition}
\newtheorem{remark-definition}[lemma]{Remark-Definition}
\newtheorem{theorem}[lemma]{Theorem}
\newtheorem{corollary}[lemma]{Corollary}
\newtheorem{proposition-conjecture}[lemma]{Proposition-conjecture}
\newtheorem{question}[lemma]{Question}
\theoremstyle{definition}
\newtheorem{definition}[lemma]{Definition}
\newtheorem{remark}[lemma]{Remark}
\newcommand{\marginnote}[1]
{%\mbox{}\marginpar{\center{\hspace{0pt}\tiny{\bf#1}}}
}
\newcounter{cy}
\newcounter{bk}
\newcounter{dps}
\title{Vortex motion of the Euler and Lake equations}
\author{Cheng Yang\thanks{Department of Mathematics, University of Toronto, Toronto, ON M5S 2E4, Canada, and the Fields Institute, Toronto, ON M5T 3J1, Canada; 
 e-mail: \tt{chyang@math.toronto.edu}
  }}
\begin{document}

\doublespacing
\date{}
\maketitle

\begin{abstract}
We start by surveying the planar point vortex motion of the Euler equations in the whole plane, half-plane and quadrant. Then we go on to prove the non-collision property of the 2-vortex system by using the explicit form of orbits of the 2-vortex system in the half-plane. We also prove that the $N$-vortex system in the half-plane is nonintegrable for $N>2$, which was suggested previously by numerical experiments without rigorous proof.

The skew-mean-curvature  (or binormal) flow in $\mathbb R^n,\;n\geq3$ with  certain symmetry can be regarded as point vortex motion of the 2D lake equations. We compare point vortex motions of the Euler and lake equations. Interesting similarities between the point vortex motion in the half-plane, quadrant and the binormal motion of coaxial vortex rings, sphere product membranes are addressed. We also raise some open questions in the paper. 

\end{abstract}

\tableofcontents

\newpage
\section{Introduction} \label{intro}

In his seminal work \cite{Helm}, Helmholtz built the foundation for the theory of vortex dynamics. In the same paper, he initiated the study of two areas: interacting point vortex motion and the motion of vortex rings. After Helmholtz's work, these two areas develop independently along different lines. There are many important studies of point vortex motions, for instance, the works of Kirchoff \cite{Kirch},  Gr\"{o}bli \cite{Grob}, Synge \cite{Syn}, Aref \cite{Aref}, Ziglin \cite{Zig}, etc. Meanwhile, the motion of vortex rings has become a classical subject of binormal flow of vortex filaments \cite{DaRi} and is studied by W. Thomson \cite{Thom}, J. J. Thomson \cite{ThomJ}, Dyson \cite{Dyso}, Hicks \cite{Hick}, Saffman \cite{Saff}, etc. In this paper, we search for possible relations between the motions  of point vortices and vortex rings. We will survey related research and prove some new results.

First we survey the vortex motion of the Euler equations. We start with the planar point vortex system and consider three cases: the whole plane, half-plane and quadrant. It is well-known that on the whole plane the $N$-vortex system for $N\leq 3$ is integrable, and the explicit solution of $3$-vortex system is given by Gr\"{o}bli \cite{Grob}. For $N>3$, Ziglin \cite{Zig} proves that vortex systems with more than 3 vortices are nonintegrable.

We study the integrability and nonintegrability properties of point vortex systems in the half-plane and quadrant. By symmetry group consideration, we know that the $N$-vortex system in the half-plane is integrable when $N\leq 2$ . In Section \ref{sec:planar_vortex}, we give the explicit solutions for 2-vortex systems in the half-plane, which we did not find in the previous literature. Furthermore, using the explicit form of the solutions, we show a non-collision property of the 2-vortex system in the half-plane.

 There are  some numerical evidences showing the nonintegrability of $N$-vortex system in the half-plane for $N>2$, see cf. \cite{KnCo}, but as far as we know the rigorous proof is missing. So in this paper we present a proof. Curiously, up to the order of $\epsilon$, the perturbed system we used to prove the nonintegrability of 3 point vortices in the half-plane appears to be the same as the perturbed system used in \cite{BaBa} to prove the nonintegrability of 3 coaxial vortex rings. This observation shows an interesting relation of point vortex system in the half-plane and coaxial vortex ring system, this similarity also is indicated by some phenomena, for example both 2 point vortices and 2 vortex rings can have leapfrogging motion. (See Figure 2 in \cite{Lim} for an illustration of the Leapfrogging motion.) Explicit solution of a single vortex in the quadrant is also given in Section 2.

One of the purposes of this paper is to compare the point vortex system in the half-plane with the system of coaxial vortex rings, as well as, compare single vortex motion in the quadrant with motion of sphere product vortex membrane under the binormal flow. 
In Section \ref{sec:rings}, we describe Dyson's model of coaxial vortex rings system. In Section \ref{sec:SphProd}, we study the motion of sphere product vortex membrane under the skew-mean-curvature flow. 

Under the axisymmetry and sphere product symmetry, the Euler equations can be reduced to the lake equations. Hence the two cases studied in Section \ref{sec:rings} and \ref{sec:SphProd} can be seen as the point vortex motion for the lake equations. So in this paper we are actually comparing the point vortex systems of the incompressible Euler equations and the lake equations. Several open problems are raised in the paper.

\medskip

{\bf Acknowledgments.} The author is grateful to Boris Khesin and Klas Modin for stimulating discussions.

%%%%%%%%%%%%%%%%%%%%%%%%%%%%%%%%%%

\medskip

%%%%%%%%%%%%%%%%%%%%%%%%%%%%%%%%%%%%%%%%%%

\section{Planar point vortex system}\label{sec:planar_vortex}
\subsection{Point vortex system in the plane}\label{subsec:plane}

The dynamics of $N$ vortices with strength $\Gamma_i$ located at ${\bf x_i}=(x_i(t),y_i(t))$, $i=1,2,\dots,N$
are governed by the equations:
\begin{equation}\label{eq:plane}
\left\{\begin{array}{l}
\dot{x_i}=-\frac{1}{2\pi}\sum_{j\neq i}^N\frac{\Gamma_j(y_i-y_j)}{l_{ij}^2}, \\\\
\dot{y_i}=\;\;\frac{1}{2\pi}\sum_{j\neq i}^N\frac{\Gamma_j(x_i-x_j)}{l_{ij}^2},
\end{array}\right.
\end{equation}
where $l_{ij}=\sqrt{(x_i-x_j)^2+(y_i-y_j)^2}$ are the distances between point vortices.

In the space $(\mathbb R^2)^N$ equipped with the symplectic structure $\omega=\sum_{i=1}^N \Gamma_i dx_i\wedge dy_i$, (\ref{eq:plane}) can be written in the Hamiltonian form:
\begin{equation}\label{eq:plane_Ham}
\left\{\begin{array}{l}
\Gamma_i\dot{x_i}=\;\;\frac{\partial H}{\partial y_i}, \\\\
\Gamma_i\dot{y_i}=-\frac{\partial H}{\partial x_i},
\end{array}\right.
\end{equation}
whose Hamiltonian function is $H({\bf x_1},{\bf x_2},\dots, {\bf x_N})=-\frac {1}{4\pi}\sum_{j\neq i}\Gamma_i\Gamma_j\log(l_{ij})$. 

Besides the vortex interaction energy given by the Hamiltonian function $H$, the following 3 quantities are also conserved under the Hamiltonian flow of (\ref{eq:plane_Ham}):
\begin{equation}
Q=\sum_{i=1}^N\Gamma_ix_i,\;P=\sum_{i=1}^N\Gamma_iy_i\;\text{and}\;I=\sum_{i=1}^N\Gamma_i(x_i^2+y_i^2).
\end{equation}

\begin{remark}
The Hamiltonian system (\ref{eq:plane_Ham}) is invariant under the action of semidirect product group $\text{SO(3)}\ltimes \mathbb R^2$, the above invariants $Q,\,P,\,I$ can be regarded as the momentum map of this group action. Applying the symplectic reduction theorem (see cf. \cite{Mars}), if the dimension of the reduced system is no more than 2, then the original system (\ref{eq:plane_Ham}) is integrable.

For the point vortex systems on the sphere or hyperbolic plane, the corresponding symmetry group are \text{SO(3)} and \text{SL(2)} respectively, then one can obtain similar integrable results by reduction theorem, see \cite{MoVi} for more details. 
\end{remark}

Hence, we have the following theorem.
\begin{theorem}\label{thm:integ}{\rm (cf. \cite{Newt})} The N-vortex problem for $N\leq 3$ is integrable. If the total strength of the vortices $\Gamma=\sum_{i=1}^N\Gamma_i=0$, the 4-vortex problem is integrable.
\end{theorem}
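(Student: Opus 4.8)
The plan is, in each case, to assemble enough first integrals in involution (out of $H$, $Q$, $P$, $I$) and then invoke the Liouville--Arnold theorem, organising the bookkeeping through the symplectic reduction picture of the Remark.

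For $N\le 3$ I would first compute, with respect to $\omega=\sum_i\Gamma_i\,dx_i\wedge dy_i$, the Poisson brackets $\{Q,P\}=\Gamma$, $\{Q,I\}=2P$, $\{P,I\}=-2Q$, while $H$ commutes with each of $Q,P,I$ since the latter are conserved. It then follows from the Leibniz rule that $\{Q^2+P^2,\,I\}=2Q\{Q,I\}+2P\{P,I\}=0$, so $H$, $I$ and $Q^2+P^2$ are three pairwise Poisson-commuting first integrals. They are functionally independent on an open dense subset of $(\mathbb R^2)^N$, and $\dim(\mathbb R^2)^N=2N\le 6$, so Liouville--Arnold gives integrability for $N=2,3$ (the case $N=1$ being trivial). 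This is just the quantitative form of the Remark: reducing $(\mathbb R^2)^N$ by the group of orientation-preserving isometries of the plane leaves a reduced system of dimension $2N-4$, which is $\le 2$ for $N\le 3$.

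For $N=4$ with $\Gamma=\sum_i\Gamma_i=0$ the three integrals above no longer suffice, since $2N=8$, so I would exploit the degeneration of the momentum map at $\Gamma=0$. The key observation is that $\{Q,P\}=\Gamma=0$: hence $Q,P$ Poisson-commute, their common level set $M_0=\{Q=0,\ P=0\}$ is flow-invariant and coisotropic, and its characteristic foliation is exactly the orbit foliation of the $\mathbb R^2$-action by simultaneous translations $x_i\mapsto x_i+a$. Marsden--Weinstein reduction then produces a symplectic manifold $\bar M_0=M_0/\mathbb R^2$ of dimension $2N-4=4$. Both $H$ and $I$ descend to $\bar M_0$: $H$ depends only on the relative distances $l_{ij}$ and is translation invariant, whereas under $x_i\mapsto x_i+a$ one has $I\mapsto I+2a\cdot(Q,P)+|a|^2\Gamma$, which on $M_0$ and with $\Gamma=0$ is just $I$. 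The descended functions $\bar H,\bar I$ remain conserved, still Poisson-commute, and are functionally independent on a dense open set; two independent commuting integrals on the $4$-dimensional symplectic manifold $\bar M_0$ is precisely Liouville integrability, which reconstructs to integrability of the $4$-vortex system.

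The routine parts are the bracket identities and the translation law for $I$. The steps that need care are the genericity hypotheses: ``integrable'' is meant in the Liouville--Arnold sense over the open dense set of regular values where the integrals are independent and the relevant group actions are free and proper, so I would either state the theorem with that proviso or restrict to generic circulations. For $N=4$ the substantive point is conceptual rather than computational --- vanishing total circulation is exactly what turns $(Q,P)$ into a commuting pair and lets $I$ survive the translation reduction --- and the same computation shows the argument collapses for $\Gamma\neq 0$, consistent with the expected nonintegrability there, while for $N=3$ Gr\"obli's explicit solution makes the conclusion completely concrete.
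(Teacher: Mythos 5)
Your treatment of $N\le 3$ is essentially identical to the paper's: the paper's entire proof of that half is the assertion that $H$, $I$, $P^2+Q^2$ are in involution and functionally independent, and your explicit brackets $\{Q,P\}=\Gamma$, $\{Q,I\}=2P$, $\{P,I\}=-2Q$, hence $\{Q^2+P^2,I\}=0$, simply make that one line concrete. Nothing to change there.

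For $N=4$ with $\Gamma=0$ you take a genuinely different route, and it is worth spelling out what each side gains and loses. The paper asserts that $H,P,Q,I$ are \emph{mutually} in involution and independent, i.e.\ four commuting integrals on the $8$-dimensional phase space; you instead reduce by translations on the level set $M_0=\{Q=P=0\}$ and use the descended pair $\bar H,\bar I$ on the $4$-dimensional quotient. Your own bracket computation explains why you were pushed into this: $\{Q,I\}=2P$ and $\{P,I\}=-2Q$ vanish only on $M_0$, so $I$ commutes with $Q$ and $P$ (equivalently, survives the translation reduction) only at zero momentum --- which also means the paper's claim of mutual involution is not literally correct off that level set, and your calculation exposes this. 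But the same observation cuts against you: as written, your argument proves Liouville integrability only of the reduced dynamics at the single momentum value $(Q,P)=(0,0)$. For initial data with $\Gamma=0$ but nonzero linear impulse $(Q,P)\ne(0,0)$ --- a case that cannot be normalized away, since $Q$ and $P$ are themselves translation-invariant when $\Gamma=0$ --- your construction yields only the one integral $\bar H$ on the $4$-dimensional reduced space, and the closing phrase ``which reconstructs to integrability of the $4$-vortex system'' silently promotes a statement about one level set to a statement about the whole phase space. To close the gap you should either add the hypothesis $Q=P=0$ (the form in which this classical result is usually stated precisely) or supply a separate argument for the nonzero-impulse levels; as it stands that case is missing from your proof, and your computation shows it is also the soft spot in the paper's one-line justification.
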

\begin{proof}
The quantities $H,\;I,\;P^2+Q^2$ are mutually involution and functional independent. 

If $\Gamma=0$, then $ H,\;P,\;Q,\;I$  are mutually involution and functional independent. 
\end{proof}

\begin{remark}
When $N=2$ and $\Gamma_1+\Gamma_2\neq 0$, the distance $l_{12}$ and the center of vorticity $C=\frac{\Gamma_1{\bf x_1}+\Gamma_2{\bf x_2}}{\Gamma_1+\Gamma_2}$ are conserved. Therefore, the two vortices rotate over concentric circles about their center of vorticity. When $\Gamma_1+\Gamma_2=0$ (the dipole case), the vortices move along the perpendicular bisector with velocity $(1/2\,(\Gamma_1^2+\Gamma_2^2))^{1/2}/(2\pi l_{12})$.
\end{remark}

\begin{remark}
It is useful to rewrite the $N$-vortex system  (\ref{eq:plane}) in terms of the vortex separations $l_{ij}$. When $N=3$, explicit solutions of the aforementioned system of vortex separations $l_{ij}$ is studied first by Gr\"{o}bli \cite{Grob}, and later reconsidered by Novikov \cite{Nov} and Aref \cite{Aref}.
\end{remark}

\begin{remark}
When $N\geq 4$, the system (\ref{eq:plane_Ham}) is nonintegrable in general, see cf. \cite{Zig}.
\end{remark}

Finally, for a three-vortex system satisfying $\Gamma_1\Gamma_2+\Gamma_2\Gamma_3+\Gamma_3\Gamma_1= 0$ and $\Gamma_1\Gamma_2l_{12}^2+\Gamma_2\Gamma_3l_{23}^2+\Gamma_3\Gamma_1l_{31}^2= 0$, there exist solutions such that the triangle of vortices collapses self-similarly to a point in finite time. {\bf Self-similar vortex collapse}  may also be found analytically for four and five vortices \cite{Onei}.

\begin{proposition}\label{cond_collapse}
The condition $\sum\limits_{1\leq i<j\leq N}\Gamma_i\Gamma_j=0$ is necessary for self-similar vortex collapse.
\end{proposition}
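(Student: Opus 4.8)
The plan is to exploit conservation of the Hamiltonian $H$ together with the scaling structure imposed by self-similarity. First I would pin down the meaning of a self-similar collapse: there is a finite time $t_c$ and a positive scalar function $\lambda(t)$, defined for $t<t_c$ with $\lambda(t)\to 0$ as $t\to t_c^-$, such that the configuration at time $t$ is obtained from the initial configuration by a rotation, a translation and a dilation by the factor $\lambda(t)$. Since rotations and translations preserve distances, this is the same as saying that every mutual distance obeys $l_{ij}(t)=\lambda(t)\,l_{ij}(0)$ for all $i<j$ simultaneously. In particular all the $l_{ij}(0)$ are strictly positive (the vortices are distinct at the initial time), so $H$ is finite and constant along the orbit on $[0,t_c)$.

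Next I would substitute the scaling relation into $H=-\frac{1}{4\pi}\sum_{i<j}\Gamma_i\Gamma_j\log l_{ij}$ to obtain
\[
H(t)=-\frac{1}{4\pi}\Bigl(\log\lambda(t)\sum_{1\le i<j\le N}\Gamma_i\Gamma_j+\sum_{1\le i<j\le N}\Gamma_i\Gamma_j\log l_{ij}(0)\Bigr).
\]
The second sum is a constant fixed by the initial data, whereas $\log\lambda(t)\to-\infty$ as $t\to t_c^-$. Since the left-hand side is constant in $t$, letting $t\to t_c^-$ forces the coefficient of $\log\lambda(t)$ to vanish, i.e. $\sum_{1\le i<j\le N}\Gamma_i\Gamma_j=0$, which is the assertion.

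It is worth recording the parallel computation with the other invariants, since it accounts for the companion condition mentioned just before the proposition in the three-vortex case: the combination $M:=\sum_{i<j}\Gamma_i\Gamma_j\,l_{ij}^2$ equals $\Gamma I-(Q^2+P^2)$ with $\Gamma=\sum_i\Gamma_i$ (a short identity using $l_{ij}^2=|{\bf x}_i|^2+|{\bf x}_j|^2-2{\bf x}_i\cdot{\bf x}_j$), hence is conserved; self-similarity gives $M(t)=\lambda(t)^2M(0)$, and $t\to t_c^-$ yields $\sum_{i<j}\Gamma_i\Gamma_j\,l_{ij}(0)^2=0$ as well. The only step needing genuine care is the first one — setting up the definition of self-similar collapse so that a single scalar $\lambda(t)$ governs all mutual distances and $\log\lambda(t)\to-\infty$ at the collapse instant; once that is in place the remainder is a substitution and a limit, so I do not anticipate a real obstacle.
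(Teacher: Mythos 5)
Your argument is correct and is essentially the paper's own proof: substitute the scaling relation $l_{ij}(t)=\lambda(t)l_{ij}(0)$ into the conserved Hamiltonian and observe that the coefficient of the divergent term $\log\lambda(t)$ must vanish. Your version is in fact slightly more careful (taking the limit $t\to t_c^-$ rather than asserting the term is zero ``for all time''), and the supplementary computation with $M=\sum_{i<j}\Gamma_i\Gamma_j l_{ij}^2$ is a correct bonus not needed for the stated claim.
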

\begin{proof}
For a self-similar motion, one can find a function $f(t)$ of time $t$, such that $l_{ij}(t)=f(t)l_{ij}(0)$, where $f(0)=1$ and $\lim\limits_{t\rightarrow\infty}f(t)=0$. Therefore the Hamiltonian at time $t$ is 
$$
H(t)=\frac {1}{4\pi}\sum_{j\neq i}\Gamma_i\Gamma_j\log(f(t)l_{ij}(0))=\frac {1}{4\pi}\sum_{j\neq i}\Gamma_i\Gamma_j\log(f(t))+\frac {1}{4\pi}\sum_{j\neq i}\Gamma_i\Gamma_j\log(l_{ij}(0)).
$$
Because $H(t)$ is conserved, the first term $\frac {1}{4\pi}\sum_{j\neq i}^N\Gamma_i\Gamma_j\log(f(t))$ must be 0 for all time $t$, which  gives us $\sum\limits_{1\leq i<j\leq N}\Gamma_i\Gamma_j=0$.
\end{proof}

\begin{remark}
From this proof, one can see that the condition $\sum\limits_{1\leq i<j\leq N}\Gamma_i\Gamma_j=0$ appeared in \cite{Aref2} is actually a necessary condition for self-similar motion. Therefore $\sum_{1\leq i<j\leq N}\Gamma_i\Gamma_j\neq0$ does not rule out other possibilities of vortex collision which are not self-similar vortex collapse.

For an $N$-vortex system in the half-plane, if we count both the vortices in the half-plane and their images, one can see that $\sum_{1\leq i<j\leq 2N}\Gamma_i\Gamma_j<0$ as a $2N$-system in the whole plane, hence by Proposition \ref{cond_collapse},  self-similar vortex collapse can not happen in the half-plane, but there still could be other kinds of collision.
\end{remark}

\medskip
%%%%%%%%%%%%%%%%%%%%%%%%%%%%%%%%%

\subsection{Point vortex system in the half-plane}\label{subsec:half_plane}

Before discussing the vortex motion in the half-plane, let us first consider general  point vortex system in a domain $D\subset\mathbb R^2$. The Green function $G_D(x,x'): D\times D\rightarrow \mathbb R$ associated to the domain $D$ is
$$
G_D(x,x')=G(x,x')+\gamma_D(x,x'),
$$
where $G(x,x')$ is the Green function in the whole plane $\mathbb R^2$, and smooth function $\gamma_D(x,x')$ is symmetric: $\gamma_D(x,x')=\gamma_D(x',x)$.

Now suppose that $N$ vortices with strength $\Gamma_i$ locate at ${\bf x_i}(t)=(x_i(t),y_i(t))$, $i=1,2,\dots,N$ in the domain $D$. Then the system of vortices in $D$ satisfies
\begin{equation}\label{eq:domain}
\frac {d}{dt}{\bf x_i}(t)=\nabla_i^{\perp}\sum_{j\neq i}^N\Gamma_jG_D({\bf x_i}(t),{\bf x_j}(t))+\frac 12 \Gamma_i\nabla_i^{\perp}\hat{\gamma_D}({\bf x_i}(t)),
\end{equation}
where $\hat{\gamma_i}({\bf x_i}(t))=\gamma_i({\bf x_i}(t),{\bf x_i}(t))$ and $\nabla_i^{\perp}$ stands for $(\partial_{y_i},-\partial_{x_i})$. See \cite{MaPu} for more details.

\begin{remark}
Unlike the vortex system in the plane, the vortex system in a general domain contains a self-interaction part (term $\frac 12 \Gamma_i\nabla_i^{\perp}\gamma_D({\bf x_i}(t))$ in the equation).
\end{remark}

This system is also Hamiltonian \cite{Lin} (see (\ref{eq:plane_Ham})):
\begin{equation}\label{eq:domain_Ham}
\left\{\begin{array}{l}
\Gamma_i\dot{x_i}=\;\;\frac{\partial H}{\partial y_i}, \\\\
\Gamma_i\dot{y_i}=-\frac{\partial H}{\partial x_i},
\end{array}\right.
\end{equation}
with Hamiltonian function 
\begin{equation}\label{Ham}
H({\bf x_1},{\bf x_2},\dots, {\bf x_N})=\frac 12\sum_{j\neq i}^N\Gamma_i\Gamma_jG_D({\bf x_i},{\bf x_j})+\frac 12 \sum_{i=1}^N\Gamma_i^2\hat{\gamma_D}({\bf x_i}).
\end{equation}

Next we apply the above formulas to the half-plane case: $D=\mathbb R^2_+=\{(x,\,y):\,y\geq0\}$, the corresponding Green function is 
\begin{equation}
G_{\mathbb R^2_+}({\bf x},{\bf x'})=-\frac {1}{2\pi}\log\|{\bf x}-{\bf x'}\|+\frac {1}{2\pi}\log\|{\bf x}-{\bf x'^*}\|,
\end{equation}
where ${\bf x'^*}=(x',-y')$  stands for the mirror image of ${\bf x'}=(x',y')$. Note that the first term of $G_{\mathbb R^2_+}({\bf x},{\bf x'})$ is the Green function in $\mathbb R^2$, hence the second term is responsible for the self-interaction:
$$
\hat{\gamma_{\mathbb R^2_+}}({\bf x})=\gamma_{\mathbb R^2_+}({\bf x},{\bf x})=\frac{1}{2\pi}\log(2y),
$$
where ${\bf x}=(x,y)$ with $y\geq 0$ is a point in the half-plane.

Put the expression of $G_{\mathbb R^2_+}$ and $\hat{\gamma_{\mathbb R^2_+}}$ into (\ref{Ham}), one obtains the Hamiltonian function for the $N$-vortex system in the half-plane $\mathbb R^2_+$:
\begin{equation}\label{Ham_half}
H_{\mathbb R^2_+}({\bf x_1},{\bf x_2},\dots, {\bf x_N})=\frac {1}{4\pi}\sum_{j\neq i}\Gamma_i\Gamma_j\log\frac{(x_i-x_j)^2+(y_i+y_j)^2}{(x_i-x_j)^2+(y_i-y_j)^2}+\frac {1}{2\pi} \sum_{i=1}^N\Gamma_i^2\log(2y_i).
\end{equation}

\begin{remark}
When $N=1$, the Hamiltonian (\ref{Ham_half}) becomes $H_{\mathbb R^2_+}(x_1,y_1)=\frac{1}{2\pi}\Gamma_1^2\log(2y_1)$ where $(x_1,y_1)$ is the position of  a single vortex and $\Gamma_1$ is its strength. This system can be solved explicitly, the single vortex moves with a speed  inversely proportional to the distance from the  $x$-axis in a straight line parallel to the $x$-axis.
\end{remark}

For the motion of two point vortices in the half-plane, the explicit solutions of Hamiltonian (\refeq{Ham_half}) are presented in Sections \ref{subsec:2-vortex1} and \ref{subsec:2-vortex2}. Suppose that in the half-plane $\mathbb R^2_+$, 2 vortices  located at ${\bf x_1}=(x_1(t),y_1(t))$ and ${\bf x_2}=(x_2(t),y_2(t))$ have strengths $\Gamma_1$ and $\Gamma_2$ respectively. The symplectic structure is $\omega=\Gamma_1dx_1\wedge dy_1+\Gamma_2dx_2\wedge dy_2$, and the corresponding Hamiltonian is 
\begin{equation}\label{Ham_half_2}
\begin{array}{rcl}
H_{\mathbb R^2_+}(x_1,x_2,y_1,y_2)&=&\frac {1}{2\pi}\{\Gamma_1\Gamma_2\log[(x_1-x_2)^2+(y_1+y_2)^2]\\
& &-\Gamma_1\Gamma_2\log[(x_1-x_2)^2+(y_1-y_2)^2]\\
& &+\Gamma_1^2\log (2y_1)+\Gamma_2^2\log (2y_2)\}\\
&=&\frac {1}{2\pi}\log\left\{(2y_1)^{\Gamma_1^2}(2y_2)^{\Gamma_2^2}\left[\frac{(x_1-x_2)^2+(y_1+y_2)^2}{(x_1-x_2)^2+(y_1-y_2)^2}\right]^{\Gamma_1\Gamma_2}\right\}.
\end{array}
\end{equation}

%%%%%%%%%%%%%%%%%%%%%%%%%%%%%%%%%%

\medskip

\subsubsection{The motion of two generic point vortices}\label{subsec:2-vortex1}

First we consider the generic case when $\Gamma_1+\Gamma_2\neq 0$. Let us introduce the notations for  the center of vorticity:
$$
x_0=\frac{\Gamma_1x_1+\Gamma_2x_2}{\Gamma_1+\Gamma_2},\;\;y_0=\frac{\Gamma_1y_1+\Gamma_2y_2}{\Gamma_1+\Gamma_2},
$$
and relative coordinates:
$$
x_r=x_1-x_2,\;\;y_r=y_1-y_2.
$$

We know that $y_0$ and the Hamiltonian function are first integrals of the system. (Here $y_0$ can be regarded as the momentum map of the Abelian group $\mathbb R$-action on $\mathbb R^2_{+}$.) We can fix a value of the momentum map $y_0=\mu$ and an energy level $H=E$, and describe the orbits of the vortices in the 2-dimensional reduced space $(x_r,y_r)$.
\begin{theorem}\label{thm:2-vortex}
For 2 vortices with strengths $\Gamma_1$ and $\Gamma_2$ such that $\Gamma_1+\Gamma_2\not=0$, located at ${\bf x_1}=(x_1(t),y_1(t))$ and ${\bf x_2}=(x_2(t),y_2(t))$ respectively in the half-plane, introduce parameters $y_0=\mu$ and $H=E$. Then the orbits in the reduced space $(x_r,y_r)$  satisfy the following equation:
\begin{equation}\label{eq:2-vortex}
\left(\mu+\frac{\Gamma_2}{\Gamma_1+\Gamma_2}y_r\right)^{\Gamma_1^2}\left(\mu-\frac{\Gamma_1}{\Gamma_1+\Gamma_2}y_r\right)^{\Gamma_2^2}\left[\frac{x_r^2+\left(2\mu+\frac{\Gamma_2-\Gamma_1}{\Gamma_1+\Gamma_2}y_r\right)^2}{x_r^2+y_r^2}\right]^{\Gamma_1\Gamma_2}=e^{2\pi E}
\end{equation}
\end{theorem}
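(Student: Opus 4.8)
The plan is to derive \eqref{eq:2-vortex} by a direct change of variables in the Hamiltonian \eqref{Ham_half_2}, exploiting the two first integrals that are already available. The phase space $\{(x_1,x_2,y_1,y_2)\}$ is four-dimensional and carries the free $\mathbb R$-action by translation in the common $x$-direction, whose momentum map is (a constant multiple of) $y_0$; symplectic reduction at the level $y_0=\mu$ therefore produces a two-dimensional reduced phase space, on which the energy $H=E$ cuts out the one-dimensional orbit. Since $H_{\mathbb R^2_+}$ in \eqref{Ham_half_2} depends on $x_1,x_2$ only through $x_1-x_2=x_r$, the variable $x_0$ drops out automatically, so $(x_r,y_r)$ are legitimate coordinates on the reduced space and it suffices to rewrite $H_{\mathbb R^2_+}$ in these coordinates with $y_0$ frozen at $\mu$.

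Concretely, I would first invert the linear change of coordinates $(y_1,y_2)\mapsto(y_0,y_r)$. From $y_0=\frac{\Gamma_1y_1+\Gamma_2y_2}{\Gamma_1+\Gamma_2}$ and $y_r=y_1-y_2$ one gets
\[
y_1=y_0+\frac{\Gamma_2}{\Gamma_1+\Gamma_2}\,y_r,\qquad y_2=y_0-\frac{\Gamma_1}{\Gamma_1+\Gamma_2}\,y_r,
\]
and hence $y_1+y_2=2y_0+\frac{\Gamma_2-\Gamma_1}{\Gamma_1+\Gamma_2}\,y_r$, while $y_1-y_2=y_r$ and $x_1-x_2=x_r$. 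Substituting these (with $y_0=\mu$) into the second, factored expression for $H_{\mathbb R^2_+}$ in \eqref{Ham_half_2}, namely
\[
e^{2\pi H}=(2y_1)^{\Gamma_1^2}(2y_2)^{\Gamma_2^2}\left[\frac{(x_1-x_2)^2+(y_1+y_2)^2}{(x_1-x_2)^2+(y_1-y_2)^2}\right]^{\Gamma_1\Gamma_2},
\]
and setting $H=E$, turns the right-hand side into precisely the left-hand side of \eqref{eq:2-vortex}.

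Strictly speaking the substitution produces an extra constant factor $2^{\Gamma_1^2+\Gamma_2^2}$, coming from $2y_i=2(\mu+\cdots)$; this is harmless and is absorbed into $E$ (equivalently, one replaces $E$ by $E-\frac{\Gamma_1^2+\Gamma_2^2}{2\pi}\log 2$), so I would either note this explicitly or carry the factor through to the final formula.

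There is essentially no deep obstacle: the statement is an algebraic identity once the reduction picture is set up, and the only points that need care are the bookkeeping of the affine change of variables and the (immediate) remark that $H_{\mathbb R^2_+}$ is translation-invariant so that the reduced dynamics really does live on the $(x_r,y_r)$-plane. If anything is delicate it is the interpretation rather than the proof --- e.g. keeping track of which branch of the level set \eqref{eq:2-vortex} corresponds to an admissible configuration with $y_1,y_2>0$ --- but that belongs to the later analysis of the orbits, not to the derivation of the equation itself.
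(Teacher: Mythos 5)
Your proposal is correct and follows essentially the same route as the paper: pass to the coordinates $(x_0,y_0,x_r,y_r)$ via a canonical transformation, observe that the Hamiltonian is independent of $x_0$ so that $y_0$ is conserved, and then freeze $y_0=\mu$ and $H=E$ to read off the orbit equation. Your remark about the constant factor $2^{\Gamma_1^2+\Gamma_2^2}$ being absorbed into $E$ is a point the paper passes over silently (its intermediate Hamiltonian drops the factors of $2$ without comment), so it is worth stating explicitly as you do.
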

\begin{proof}
First take a canonical transformation 
$$
(x_1,y_1,x_2,y_2)\mapsto\left(x_0,(\Gamma_1+\Gamma_2)y_0,x_r,\frac{\Gamma_1\Gamma_2}{\Gamma_1+\Gamma_2}y_r\right),
$$
in the new coordinates the Hamiltonian becomes
\begin{equation}\label{eq:Ham_half3}
\begin{array}{rcl}
& &H_{\mathbb R^2_+}\left(x_0,(\Gamma_1+\Gamma_2)y_0,x_r,\frac{\Gamma_1\Gamma_2}{\Gamma_1+\Gamma_2}y_r\right)\\
&=&\frac{1}{2\pi}\log\left\{\left(y_0+\frac{\Gamma_2}{\Gamma_1+\Gamma_2}y_r\right)^{\Gamma_1^2}\left(y_0-\frac{\Gamma_1}{\Gamma_1+\Gamma_2}y_r\right)^{\Gamma_2^2}\left[\frac{x_r^2+\left(2y_0+\frac{\Gamma_2-\Gamma_1}{\Gamma_1+\Gamma_2}y_r\right)^2}{x_r^2+y_r^2}\right]^{\Gamma_1\Gamma_2}\right\}.
\end{array}
\end{equation}
One can see that this new Hamiltonian does not depend on $x_0$, hence $y_0$ is a conserved quantity. By fixing $y_0=\mu$ and the Hamiltonian $H_{\mathbb R^2_+}=E$, we obtain the equation for orbits in $(x_r,y_r)$ coordinates:
$$
\left(\mu+\frac{\Gamma_2}{\Gamma_1+\Gamma_2}y_r\right)^{\Gamma_1^2}\left(\mu-\frac{\Gamma_1}{\Gamma_1+\Gamma_2}y_r\right)^{\Gamma_2^2}\left[\frac{x_r^2+\left(2\mu+\frac{\Gamma_2-\Gamma_1}{\Gamma_1+\Gamma_2}y_r\right)^2}{x_r^2+y_r^2}\right]^{\Gamma_1\Gamma_2}=e^{2\pi E}.
$$
\end{proof}
A corollary of this theorem is the following non-collision property.
\begin{corollary}\label{thm:2-vortex_collision1}
If $\Gamma_1+\Gamma_2\neq 0$, the two point vortices in the half-plane will not collide or hit the boundary ($x$-axis).
\end{corollary}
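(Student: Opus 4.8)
The plan is to use that the whole orbit lies on a single level curve of a conserved function which \emph{diverges} (to $0$ or to $+\infty$) exactly at the configurations we want to exclude. Write $y_1=\mu+\tfrac{\Gamma_2}{\Gamma_1+\Gamma_2}y_r$ and $y_2=\mu-\tfrac{\Gamma_1}{\Gamma_1+\Gamma_2}y_r$; when $y_0=\mu$ these are the actual heights of the two vortices, and they are affine functions of $y_r$. By Theorem \ref{thm:2-vortex}, after fixing $y_0=\mu$ and $H_{\mathbb{R}^2_+}=E$, the orbit in $(x_r,y_r)$ lies in the level set $\{F=e^{2\pi E}\}$, where
\[
F(x_r,y_r)=y_1^{\,\Gamma_1^2}\,y_2^{\,\Gamma_2^2}\left[\frac{x_r^2+(y_1+y_2)^2}{x_r^2+y_r^2}\right]^{\Gamma_1\Gamma_2},
\]
and $e^{2\pi E}$ is a fixed number in $(0,+\infty)$. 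We may assume $\Gamma_1,\Gamma_2\neq 0$, since otherwise the system reduces to a single vortex, which obviously never collides or reaches the axis. A collision or boundary hit would mean that along some sequence of times one of $l_{12}^2=x_r^2+y_r^2$, $y_1$, $y_2$ tends to $0$; since $F$ is continuous on the open physical region $\{y_1>0,\,y_2>0\}\setminus\{(0,0)\}$ and constant along the orbit, it suffices to show that $F\to 0$ or $F\to+\infty$ along any such sequence.

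The only algebraic input is $(y_1+y_2)^2-y_r^2=4y_1y_2$ (from $y_r=y_1-y_2$), which rewrites the bracket as $1+\dfrac{4y_1y_2}{x_r^2+y_r^2}\ge 1$ on the physical region. Now, if $l_{12}\to 0$ then $x_r,y_r\to 0$, hence $y_1,y_2\to\mu$, which forces $\mu\ge 0$. If $\mu>0$, the prefactor $y_1^{\Gamma_1^2}y_2^{\Gamma_2^2}\to\mu^{\Gamma_1^2+\Gamma_2^2}\in(0,\infty)$ while the bracket $1+4y_1y_2/l_{12}^2\to+\infty$, so $F\to+\infty$ when $\Gamma_1\Gamma_2>0$ and $F\to 0$ when $\Gamma_1\Gamma_2<0$. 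If $\mu=0$, then $y_1,y_2\to 0$, so the prefactor tends to $0$; moreover $\mu=0$ gives $y_1y_2=-\tfrac{\Gamma_1\Gamma_2}{(\Gamma_1+\Gamma_2)^2}y_r^2$ and $x_r^2+y_r^2\ge y_r^2$, so the bracket stays in the bounded interval $\bigl[1,\,1+4|\Gamma_1\Gamma_2|/(\Gamma_1+\Gamma_2)^2\bigr]$ (positivity of $y_1,y_2$ also forces $\Gamma_1\Gamma_2<0$ in this case), whence $F\to 0$.

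Next suppose $y_1\to 0$ (the case $y_2\to 0$ is symmetric). Since $y_r=\tfrac{\Gamma_1+\Gamma_2}{\Gamma_2}(y_1-\mu)$, along the approach $y_r\to-\tfrac{(\Gamma_1+\Gamma_2)\mu}{\Gamma_2}$ and $y_2=y_1-y_r\to L:=\tfrac{(\Gamma_1+\Gamma_2)\mu}{\Gamma_2}$, with $L\ge 0$ because $y_2>0$ throughout. If $L>0$, then $l_{12}^2\ge y_r^2\to L^2>0$, so $0\le 4y_1y_2/l_{12}^2\to 0$, the bracket $\to 1$, and $F\to 0$ (indeed $y_1^{\Gamma_1^2}\to 0$, $y_2^{\Gamma_2^2}\to L^{\Gamma_2^2}$, and the bracket $\to 1$). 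If $L=0$, then $\mu=0$, so $y_2\to 0$ as well, and $F\to 0$ exactly as in the $\mu=0$ subcase above. In every case $F\to 0$ or $F\to+\infty$, contradicting $F\equiv e^{2\pi E}$; hence $l_{12}(t),y_1(t),y_2(t)$ remain strictly positive for all $t$, i.e.\ the two vortices never collide and never reach the $x$-axis.

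I expect the one delicate point to be the degenerate configuration $\mu=0$ (center of vorticity on the $x$-axis), which can occur precisely when $\Gamma_1\Gamma_2<0$: there the bracket does \emph{not} blow up as $l_{12}\to 0$, so the contradiction must instead come from the vanishing of the prefactor $y_1^{\Gamma_1^2}y_2^{\Gamma_2^2}$ together with the boundedness of the bracket. It is worth recording that "the orbit approaches a singular configuration" automatically produces convergent limits for $y_r,y_1,y_2$ along the relevant time sequence, since these are affine functions of one another once $\mu$ is fixed, so no separate compactness argument is needed.
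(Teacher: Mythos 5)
Your argument is correct and is essentially the paper's own proof: both rely on the fact that the conserved Hamiltonian (equivalently your level-set function $F=e^{2\pi H}$) degenerates to $0$ or $+\infty$ at any collision or boundary configuration, contradicting conservation. Your write-up is in fact somewhat more careful than the paper's --- in particular in tracking the sign of $\Gamma_1\Gamma_2$ and in the degenerate case $\mu=0$, where the bracket stays bounded and the contradiction must come from the vanishing prefactor $y_1^{\Gamma_1^2}y_2^{\Gamma_2^2}$ --- but the underlying mechanism is the same.
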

\begin{proof}
There are 3 possibilities of collision, and we exclude them one by one.

1. The 2 vortices hit each other, which means $x_r=y_r=0$ in finite time. We claim that if $x_r=y_r=0$ at a certain time, the Hamiltonian function (\ref{eq:Ham_half3}) becomes infinity.  First if $y_0\neq 0$, it is clear that the function (\ref{eq:Ham_half3}) becomes infinity as $x_r=y_r=0$. And if $y_0=0$, the rational function $\left(y_0+\frac{\Gamma_2}{\Gamma_1+\Gamma_2}y_r\right)^{\Gamma_1^2}\left(y_0-\frac{\Gamma_1}{\Gamma_1+\Gamma_2}y_r\right)^{\Gamma_2^2}\left[\frac{x_r^2+\left(2y_0+\frac{\Gamma_2-\Gamma_1}{\Gamma_1+\Gamma_2}y_r\right)^2}{x_r^2+y_r^2}\right]^{\Gamma_1\Gamma_2}=0$, so after taking the logarithm the Hamiltonian (\ref{eq:Ham_half3}) becomes infinity. However the Hamiltonian should not  become infinity because it is conserved, hence 2 vortices will not hit each other. 

2. One vortex hits the boundary while the other does not. In this case using the expression of Hamiltonian function (\ref{Ham_half_2}), one can see that if $y_1=0, y_2\neq 0$, the Hamiltonian becomes infinity. Again this is a contradiction
since the Hamiltonian is a conserved quantity.

3. Both vortices hit the boundary. If $y_0\neq 0$ at the initial time, then this case is impossible because the center of vorticity $y_0$ is conserved for all time. And if $y_0=0$, put $y_r=0$ and $x_r\neq 0$ (we only consider $x_r\neq 0$ here since $x_r=0$ is already studied in case 1) into  Hamiltonian (\ref{eq:Ham_half3}), we see that the Hamiltonian becomes infinity which is a contradiction.
\end{proof}

\begin{remark}
By changing the parameters $\mu$ and $E$, we can study the motion of 2 vortices. For example if the 2 vortices have the same strength, the equation in terms of relative position is (modulo certain constants)
$$
\frac{1}{1-y_r^2}-\frac{1}{1+x_r^2}=\exp(-E).
$$
\end{remark}

%%%%%%%%%%%%%%%%%%%%%%%%%%%%%%%%%%

\medskip

\subsubsection{The motion of a vortex dipole}\label{subsec:2-vortex2}

Next let us consider the motion of the 2-vortex system in the half-plane when $\Gamma_1=-\Gamma_2$. Suppose that $\Gamma_1=-\Gamma_2=1$ and introduce
the new (center and relative) coordinates:
$$
x_0=\frac{x_1+x_2}{2},\;\;y_0=\frac{y_1+y_2}{2},
$$
and
$$
x_r=x_1-x_2,\;\;y_r=y_1-y_2.
$$

In this case, $y_r$ and Hamiltonian function are first integrals of the system. Hence we fix $y_r=\nu$ and $H=E$, and describe the orbits of the vortices in the 2-dimensional reduced space $(x_r,y_0)$.
\begin{theorem}\label{thm:2-vortex-opposite}
For the dipole case, i.e. 2 vortices with strengths $\Gamma_1=1$ and $\Gamma_2=-1$ located at ${\bf x_1}=(x_1(t),y_1(t))$ and ${\bf x_2}=(x_2(t),y_2(t))$ respectively in the half-plane, set parameters $y_r=\nu$ and $H=E$. Then the orbits in the reduced space $(x_r,y_0)$ satisfy the following equation:
\begin{equation}\label{eq:2-vortex-opposite}
\frac{1}{\nu^2+x_r^2}+\frac{1}{4y_0^2-\nu^2}=e^{-2\pi E}
\end{equation}
\end{theorem}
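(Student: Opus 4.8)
The plan is to mimic the proof of Theorem~\ref{thm:2-vortex}, with the one structural change forced by the degeneracy $\Gamma_1+\Gamma_2=0$: the center of vorticity $(\Gamma_1 y_1+\Gamma_2 y_2)/(\Gamma_1+\Gamma_2)$ is undefined, so I would replace it by the arithmetic mean $y_0=(y_1+y_2)/2$ as the ``center'' coordinate, after which the conserved momentum and the cyclic variable turn out to be $y_r$ and $x_0$ rather than $y_0$ and $x_0$.

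First I would substitute $\Gamma_1=1$, $\Gamma_2=-1$ into the two-vortex Hamiltonian~(\ref{Ham_half_2}). Since $\Gamma_1^2=\Gamma_2^2=1$ and $\Gamma_1\Gamma_2=-1$, it collapses to
\[
H_{\mathbb R^2_+}=\frac{1}{2\pi}\log\left\{\frac{4y_1y_2\big[(x_1-x_2)^2+(y_1-y_2)^2\big]}{(x_1-x_2)^2+(y_1+y_2)^2}\right\}.
\]
Next I would introduce $x_0=(x_1+x_2)/2$, $y_0=(y_1+y_2)/2$, $x_r=x_1-x_2$, $y_r=y_1-y_2$ and check that this is canonical for $\omega=dx_1\wedge dy_1-dx_2\wedge dy_2$: a one-line wedge-product computation gives $\omega=dx_0\wedge dy_r+dx_r\wedge dy_0$, so the conjugate pairs are $(x_0,y_r)$ and $(x_r,y_0)$. (Consistently with the invariants listed in Section~\ref{subsec:plane}, $y_r=\Gamma_1 y_1+\Gamma_2 y_2=P$ is exactly the momentum map of the $x$-translation symmetry of the half-plane, which is why it is conserved.)

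Substituting $y_1+y_2=2y_0$, $y_1-y_2=y_r$, $y_1y_2=y_0^2-y_r^2/4$ and $x_1-x_2=x_r$, the Hamiltonian becomes
\[
H_{\mathbb R^2_+}=\frac{1}{2\pi}\log\left\{\frac{(4y_0^2-y_r^2)(x_r^2+y_r^2)}{x_r^2+4y_0^2}\right\},
\]
which is independent of $x_0$; hence $y_r$ is conserved, and on a fixed level $\{y_r=\nu,\ H=E\}$ the reduced orbits live in the $(x_r,y_0)$-plane and satisfy $e^{2\pi E}\,(x_r^2+4y_0^2)=(4y_0^2-\nu^2)(x_r^2+\nu^2)$. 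Finally, taking reciprocals and splitting the numerator via $x_r^2+4y_0^2=(x_r^2+\nu^2)+(4y_0^2-\nu^2)$ gives the stated form $\frac{1}{\nu^2+x_r^2}+\frac{1}{4y_0^2-\nu^2}=e^{-2\pi E}$.

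The argument is essentially a routine computation; the only two spots that need a moment of care are verifying that the mean-value substitution is genuinely symplectic (the literal ``center of vorticity'' change of variables is singular here), and noticing the partial-fraction rearrangement $\tfrac{a+b}{ab}=\tfrac1a+\tfrac1b$ that converts the orbit relation into the symmetric displayed form. It is also worth remarking that $4y_0^2-\nu^2=4y_1y_2>0$ throughout the open half-plane, so the argument of the logarithm is positive and no branch issues arise.
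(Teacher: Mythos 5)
Your proof is correct and follows essentially the same route as the paper: the same canonical transformation $(x_1,y_1,x_2,y_2)\mapsto(x_0,y_r,x_r,y_0)$ verified via $dx_1\wedge dy_1-dx_2\wedge dy_2=dx_0\wedge dy_r+dx_r\wedge dy_0$, the same reduced Hamiltonian $\frac{1}{2\pi}\log\bigl[(4y_0^2-y_r^2)(x_r^2+y_r^2)/(x_r^2+4y_0^2)\bigr]$, and the same conclusion that $y_r$ is conserved because $x_0$ is cyclic. The only additions are that you spell out the final partial-fraction rearrangement (which the paper leaves implicit) and note the positivity $4y_0^2-\nu^2=4y_1y_2>0$; both are correct and harmless.
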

\begin{proof}
One can check that $dx_1\wedge dy_1-dx_2\wedge dy_2=dx_0\wedge dy_r+dx_r\wedge dy_0$, so the map $(x_1,y_1,x_2,y_2)\mapsto(x_0,y_r,x_r,y_0)$ is a canonical transformation,
and in the new coordinates, the Hamiltonian becomes
$$
H_{\mathbb R^2_+}\left(x_0,y_r,x_r,y_0\right)=\frac{1}{2\pi}\log\left\{\frac{(4y_0^2-y_r^2)(x_r^2+y_r^2)}{x_r^2+4y_0^2}\right\}.
$$ 
This new Hamiltonian does not depend on $x_0$, hence $x_0$'s conjugate coordinate $y_r$ is a conserved quantity. By fixing $y_r=\nu$ and $H_{\mathbb R^2_+}=E$, we obtain the equation for orbits in $(x_r,y_0)$ coordinates:
$$
\frac{1}{\nu^2+x_r^2}+\frac{1}{4y_0^2-\nu^2}=e^{-2\pi E}.
$$
\end{proof}
\begin{remark}
Here $\nu=y_r=y_1-y_2$, so $\nu=0$ means that the vortex dipole is symmetric with respect to their vertical bisector, and this case can be reduced to a single vortex in the quadrant. (See Section \ref{subsec:quadrant}.)

In general, an $N$-vortex system in the half-plane gives rise to a $2N$-vortex system in the plane ($N$ vortices plus $N$ images). Similarly, an $N$-vortex system in the quadrant gives rise to a $2N$-vortex system in the half plane, which also can be seen as a $4N$-vortex system.
\end{remark}

\begin{remark}
To study Equation (\ref{eq:2-vortex-opposite}), for simplicity, we assume that the RHS of (\ref{eq:2-vortex-opposite}) $e^{-2\pi E}=C>0$. Suppose that $x_r=0$ then Equation (\ref{eq:2-vortex-opposite}) becomes
$$
\frac{1}{\nu^2}+\frac{1}{4y_0^2-\nu^2}=C.
$$ 
Then solve this equation for $y_0$, we get $4y_0^2=\frac{C\nu^2}{C-1/\nu^2}$. Hence when $C>\frac{1}{\nu^2}$, $y_0$ has real solutions, i.e., $x_r$ can be 0, otherwise, if $C\leq\frac{1}{\nu^2}$, $x_r$ can not be 0.

Note that here we have two parameters: $C=e^{-2\pi E}>0$ is related to the energy and $\nu=y_1-y_2$ is the difference of 2 vortices in $y$-direction.

From the above discussion, we know that when $C>\frac{1}{\nu^2}$, a typical orbit in $(x_r,y_0)$ coordinates looks like the following graph.
\begin{figure}[H]
\centering
\includegraphics[width = 0.7\textwidth]{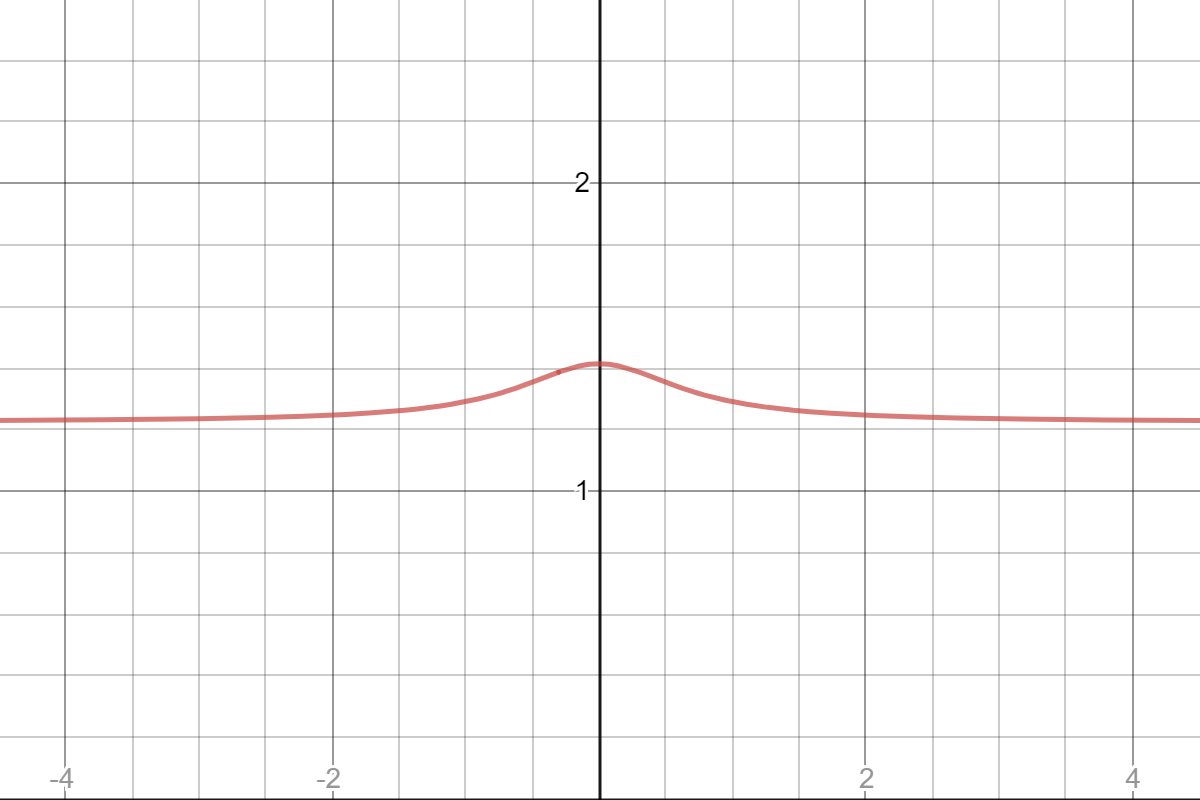}
\caption{Orbit when $C>\frac{1}{\nu^2}$}
\end{figure}

And when $0<C\leq\frac{1}{\nu^2}$, an orbit in $(x_r,y_0)$ variables looks like the following graph.
\begin{figure}[H]
\centering
\includegraphics[width = 0.7\textwidth]{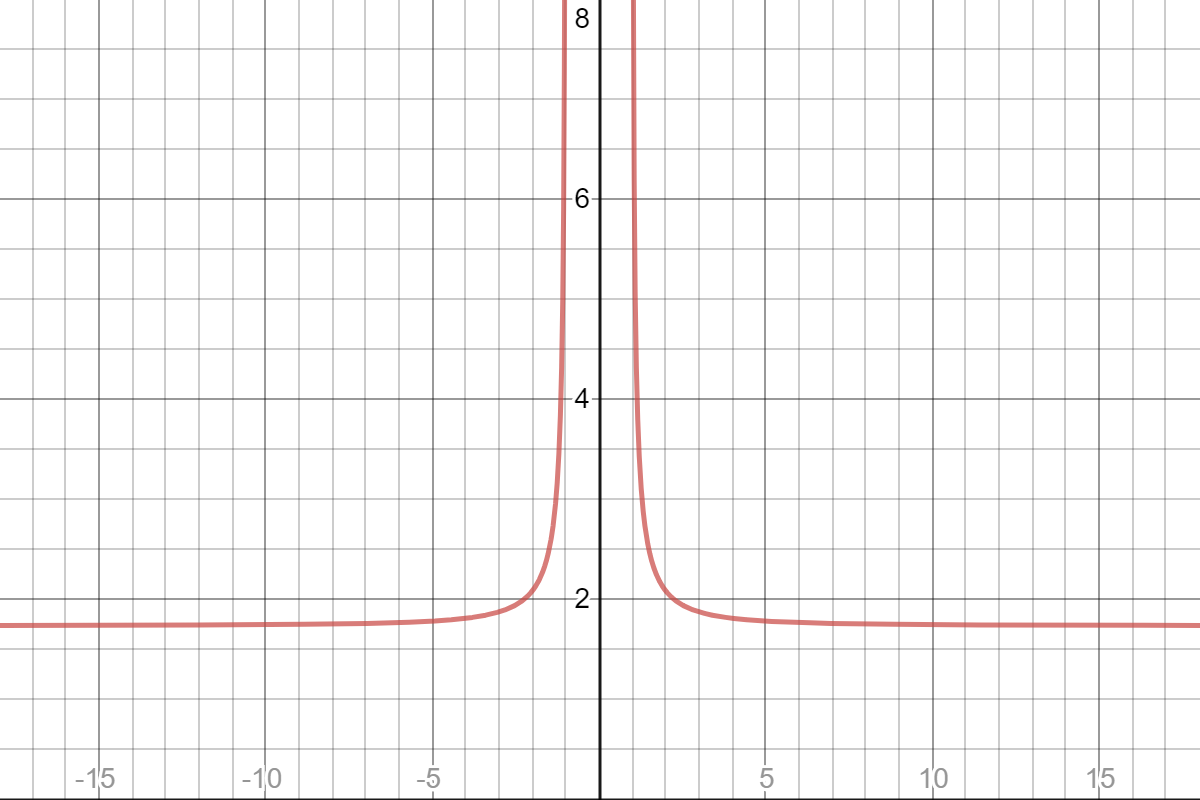}
\caption{Orbit when $0<C\leq\frac{1}{\nu^2}$}
\end{figure}
\end{remark}
\begin{remark}
For orbits  when $C\leq\frac{1}{\nu^2}$, we can study the asymptotes. If $y_0\to \pm\infty$, we have $\frac{1}{x_r^2+\nu^2}\to C$, i.e., $|x_r|\to\sqrt{1/C-\nu^2}$, so $x_r^2+y_r^2\to1/C$. This means if two vortices move up together, the distance between them approaches $1/C$ (slant asymptotes in the regular $(x_1,x_2,y_1,y_2)$ coordinates). 

If $x_r\to \pm\infty$, we have $y_0\to \frac 12\sqrt{1/C+\nu^2}$. This means if two vortices move apart each other in $x$-direction, their center in $y$-direction approaches  $\frac12\sqrt{1/C+\nu^2}$ (horizontal asymptotes).
\end{remark}
Similar to the previous case, one can prove the following property on collision.
\begin{corollary}\label{thm:2-vortex_collision2}
The vortex dipole in the half-plane will not collide or hit the boundary ($x$-axis).
\end{corollary}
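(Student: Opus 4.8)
The plan is to mirror the proof of Corollary~\ref{thm:2-vortex_collision1}, now using that in the dipole case the two conserved quantities are $y_r=\nu$ (the momentum map of the horizontal translations) and the energy $H=E$, together with the elementary identity
$$
4y_0^2-y_r^2=(y_1+y_2)^2-(y_1-y_2)^2=4y_1y_2,
$$
which is $\geq 0$ always and $>0$ precisely when both vortices sit in the open half-plane. A collision or a boundary contact occurring at a finite time $T$ is, as in the generic case, the limit $t\to T$ of smooth motion along which $H\equiv E$ and $y_r\equiv\nu$; so it is enough to show that every such degeneration drives the argument of the logarithm in the reduced Hamiltonian
$$
H_{\mathbb R^2_+}=\frac{1}{2\pi}\log\!\left\{\frac{(4y_0^2-y_r^2)(x_r^2+y_r^2)}{x_r^2+4y_0^2}\right\}
$$
to $0$, hence $H\to-\infty$, contradicting its conservation. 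One may argue equally well directly from the orbit equation~\eqref{eq:2-vortex-opposite}.

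For the boundary cases, observe that $(y_1+y_2)^2\geq(y_1-y_2)^2$ gives $x_r^2+y_r^2\leq x_r^2+4y_0^2$, so the argument of the logarithm above is at most $4y_0^2-y_r^2=4y_1y_2$. Hence if either $y_1\to 0$ or $y_2\to 0$ then $H\leq\frac1{2\pi}\log(4y_1y_2)\to-\infty$, a contradiction; equivalently, the term $\tfrac1{4y_0^2-\nu^2}=\tfrac1{4y_1y_2}$ in~\eqref{eq:2-vortex-opposite} blows up while $e^{-2\pi E}$ stays fixed. This rules out both ``one vortex reaches the $x$-axis'' and ``both vortices reach the $x$-axis.''

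For a mutual collision we need $x_r\to 0$ and $y_r\to 0$. Since $y_r\equiv\nu$ is conserved, this cannot happen at all when $\nu\neq 0$, and when $\nu=0$ it amounts to $x_r\to 0$ alone; then \eqref{eq:2-vortex-opposite} becomes $\tfrac1{x_r^2}+\tfrac1{4y_0^2}=e^{-2\pi E}$, which forces $x_r^2\geq e^{2\pi E}>0$ and contradicts $x_r\to 0$. (The same follows from the reduced Hamiltonian, whose argument is then $\dfrac{4y_0^2x_r^2}{x_r^2+4y_0^2}\leq x_r^2\to 0$.) Together with the boundary cases this exhausts the three possible modes of collision, so the dipole neither collides nor reaches the boundary.

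The only point requiring care is that the degenerate sub-cases in which several of the coordinates vanish simultaneously --- most notably $\nu=0$ with $x_r\to 0$ and $y_0\to 0$, i.e. the two vortices shrinking onto the origin together --- are genuinely covered and do not leave an indeterminate finite limit. The uniform bounds of the log-argument by $4y_1y_2$ and by $x_r^2$ handle exactly this: they give $H\to-\infty$ irrespective of the relative rates at which the quantities tend to zero. I expect this to be the only real subtlety; the remainder is a routine transcription of the argument used for Corollary~\ref{thm:2-vortex_collision1}.
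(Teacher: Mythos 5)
Your proof is correct, but it dispatches two of the three cases by a different mechanism than the paper. The paper observes that both ``mutual collision'' and ``both vortices on the axis'' force $y_r=0$ at some instant, hence (by conservation of $y_r$) for all time; the dipole is then symmetric about its vertical bisector and reduces to a single vortex in the quadrant, whose explicit trajectory $r=C/\sin 2\theta$ from Section~\ref{subsec:quadrant} never meets the boundary. Only the case ``one vortex on the axis, the other not'' is handled there by blow-up of the Hamiltonian \eqref{Ham_half_2}. You instead run every case through conservation of $H$ and $y_r$ alone, using the two uniform bounds
$$
\frac{(4y_0^2-y_r^2)(x_r^2+y_r^2)}{x_r^2+4y_0^2}\;\leq\;4y_1y_2
\qquad\text{and}\qquad
\frac{4y_0^2\,x_r^2}{x_r^2+4y_0^2}\;\leq\;x_r^2 ,
$$
valid whenever $y_1y_2\geq 0$. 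This buys you a self-contained argument (no forward reference to the quadrant analysis) and, as you note, it cleanly covers the degenerate simultaneous limits such as $\nu=0$ with $x_r\to 0$ and $y_0\to 0$, where the rates of vanishing are irrelevant because the log-argument is dominated by quantities that each tend to $0$. What the paper's route buys in exchange is brevity and a structural insight --- that the symmetric dipole \emph{is} the single quadrant vortex --- which your estimates do not exhibit. Two small remarks: the orbit equation \eqref{eq:2-vortex-opposite} with $\nu=0$ actually gives the strict inequality $x_r^2>e^{2\pi E}$, not $\geq$; and your boundary-contact argument tacitly uses that the other vortex stays at bounded height, which here is automatic since $y_2=y_1-\nu$ --- worth stating, though the paper's own Case~2 is no more explicit on this point.
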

\begin{proof}
Again, we just need to check the three cases in the proof of Corollary \ref{thm:2-vortex_collision1}. For Case 1 and 3, we have $y_r=0$. Since for vortex dipole, $y_r$ is a conserved quantity, we know that the two vortices have the same $y$ coordinate for all time $t$, hence one can reduce the motion of vortex dipole to a single vortex in the quarter-plane, and we can see in the later section, a single vortex moves in the quarter-plane will not hit the boundary, therefore Case 1 and 3 will not happen.

As for Case 2, in the expression of Hamiltonian function (\ref{Ham_half_2}), one can take for instance $y_1=0,\;y_2\neq 0$ and $\Gamma_1=1, \Gamma_2=-1$, and the Hamiltonian becomes infinity which is a contradiction.
\end{proof}

Corollary \ref{thm:2-vortex_collision1} and \ref{thm:2-vortex_collision2} together show that $N$-vortex system in the half-plane will not collide or hit the boundary in finite time when $N=2$, and one would ask for the case when $N>2$.

\begin{question}\label{q2}
Is it possible that an $N$-vortex system in the half-plane collides or hits the boundary in  finite time for $N>2$? 
\end{question}

It is easy to see that for $N=2$ the system is integrable, and we presented its explicit solutions above. So a natural question to ask is 
\begin{question}\label{q3}
Prove nonintegrability for the $N$-vortex system in the half-plane $\mathbb R^2_+$ when $N>2$.
\end{question}

We answer the second question in the following theorem.

\begin{theorem}\label{nonint_half}
The $N$-vortex system in the half-plane $\mathbb R^2_+$ when $N>2$ is nonintegrable.
\end{theorem}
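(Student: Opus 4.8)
The plan is to reduce to the case $N=3$ and then to establish nonintegrability of the three‑vortex system in $\mathbb R^2_+$ by a Poincar\'e--Melnikov splitting argument, matching the first‑order‑in‑$\epsilon$ perturbed system obtained here with the one analyzed by Bagrets--Bagrets \cite{BaBa} for three coaxial vortex rings. For the reduction: when $\Gamma_4=\dots=\Gamma_N=0$ the terms with $j\geq 4$ drop out of the equations for $\mathbf x_1,\mathbf x_2,\mathbf x_3$, so these three vortices evolve autonomously by the three‑vortex equations in $\mathbb R^2_+$ while the rest are passively advected; hence the three‑vortex dynamics is a factor of the $N$‑vortex dynamics. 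The chaotic invariant set (a transverse homoclinic orbit of a reduced Poincar\'e map) produced below for $N=3$ therefore sits inside the $N$‑vortex phase space and persists, by the usual stability of normally hyperbolic objects and their transverse intersections, when the weak extra vortices are switched on. One must argue at the level of the dynamics rather than the symplectic form $\omega=\sum_i\Gamma_i\,dx_i\wedge dy_i$, which degenerates as $\Gamma_j\to 0$; so it suffices to treat $N=3$.

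For $N=3$ the phase space is six‑dimensional and the system has two independent involutive integrals: the Hamiltonian $H_{\mathbb R^2_+}$ of (\ref{Ham_half}) and the momentum $P=\Gamma_1y_1+\Gamma_2y_2+\Gamma_3y_3$ generated by translations parallel to the $x$‑axis. Liouville integrability would require one further integral functionally independent of $H_{\mathbb R^2_+}$ and $P$. Symplectic reduction by this $\mathbb R$‑action yields a four‑dimensional reduced system, on whose three‑dimensional energy levels I want to exhibit a Poincar\'e section whose return map has positive topological entropy. To create a hyperbolic structure with a homoclinic loop I introduce a small parameter $\epsilon$ realizing a separation of scales — for instance a weak third vortex $\Gamma_3=\epsilon$, possibly combined with the two‑vortex sub‑configuration being taken near a relative equilibrium so that the forcing is small and explicit. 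At $\epsilon=0$, vortices $1$ and $2$ together with their mirror images form the integrable two‑vortex half‑plane system of Theorem \ref{thm:2-vortex}, whose relative motion is periodic; passing to a frame translating with the center of vorticity of this pair makes the induced velocity field time‑periodic, and vortex $3$ is then a passive tracer in it, whose frozen‑time field (for a suitable choice of the unperturbed two‑vortex orbit) has a hyperbolic stagnation point with a homoclinic separatrix. Thus for small $\epsilon$ the reduced system is a periodically forced one‑degree‑of‑freedom Hamiltonian system that is a first‑order perturbation of this integrable picture.

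The core of the argument is the Melnikov function $M(t_0)$ measuring the $O(\epsilon)$ splitting of this separatrix for the forced return map. The point of the remark preceding the statement is that, after matching notation, the reduced perturbed system coincides to first order in $\epsilon$ with the one in \cite{BaBa}; I would verify this identification and invoke (or rerun) their computation, which exhibits $M$ as a nonconstant analytic function with simple zeros. Simple zeros of $M$ give transverse homoclinic points of the return map on a two‑dimensional section of an energy level, hence a Smale horseshoe and positive topological entropy. A real‑analytic first integral would have to be constant along the stable and unstable manifolds of the hyperbolic orbit, and — since by the inclination lemma these manifolds accumulate on a set with nonempty interior in the energy level — could therefore only be a function of $H_{\mathbb R^2_+}$ and $P$, contradicting independence. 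So the $N$‑vortex system is nonintegrable for $N=3$, and by the reduction above for every $N>2$.

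The step I expect to be the main obstacle is showing that the Melnikov integral is \emph{not} identically zero. The mirror images endow the three‑vortex problem in $\mathbb R^2_+$ with a reflection symmetry, and reflection symmetries are exactly what can force Melnikov integrals to vanish identically, leaving the separatrix unsplit; obtaining a genuinely transverse intersection requires choosing the unperturbed orbit (and, if needed, which strength is taken small, or a purely geometric small parameter) so that no such cancellation occurs — which is precisely where the comparison with the coaxial‑ring problem of \cite{BaBa}, where the analogous integral is known to be nonzero, does the essential work. A secondary point requiring care is the $N=3\Rightarrow N>2$ reduction, handled above by working with the hyperbolic / transverse‑homoclinic structure rather than with the degenerating symplectic form.
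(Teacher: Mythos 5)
Your proposal follows essentially the same route as the paper: a restricted three-vortex problem with a passive third vortex, a perturbative reduction whose first-order system is matched with the coaxial-ring system of \cite{BaBa}, their nonvanishing Melnikov integral yielding transverse homoclinic points and hence nonintegrability, followed by a continuation to the full problem (the paper invokes \cite{Kha} for this last step). The one precision worth noting is that the paper takes $\Gamma_3=0$ exactly and uses the purely geometric small parameter you mention only as an alternative --- the ratio of the cluster size to the distance from the wall, via $x_i=\epsilon\tilde x_i$, $y_i=1+\epsilon\tilde y_i$ --- so that the unperturbed motion is whole-plane corotation and the image/boundary interaction is the $O(\epsilon)$ forcing, which is exactly what makes the perturbed tracer Hamiltonian coincide term-by-term with that of \cite{BaBa}.
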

\begin{proof}
First we consider a restricted 3-vortex system with vortices whose strengths are $\Gamma_1=\Gamma_2=1$, and $\Gamma_3=0$, thus the Hamiltonian for the motion of vortices with strengths $\Gamma_1$ and $\Gamma_2$ is 
$$
H_{12}(x_1,y_1,x_2,y_2)=\frac{1}{2\pi}\log \left(4y_1y_2\frac{(x_1-x_2)^2+(y_1+y_2)^2}{(x_1-x_2)^2+(y_1-y_2)^2}\right),
$$
and the Hamiltonian for the 0-strength vortex is 
$$
H_3(x_3,y_3,t)= \frac{1}{2\pi}\log\frac{[(x_1-x_3)^2+(y_1-y_3)^2][(x_2-x_3)^2+(y_2-y_3)^2]}{[(x_1-x_3)^2+(y_1+y_3)^2][(x_2-x_3)^2+(y_2+y_3)^2]}.
$$
Assume that $x_i=\epsilon\tilde{x_i}$, $y_i=1+\epsilon\tilde y_i$, $i=1,\,2,\,3$, i.e., vortices are close to each other and relatively far from the boundary. Then $H_{12}$ becomes
\begin{equation}\label{eq:H_12}
\begin{array}{rcl}
&&\tilde H_{12}(\tilde{x_1},\tilde{y_1},\tilde{x_2},\tilde{y_2})=\frac{1}{2\pi}\log\left[4(1+\epsilon\tilde y_1)(1+\epsilon\tilde y_2)\frac{\epsilon^2(\tilde x_1-\tilde x_2)^2+(2+\epsilon(\tilde y_1+\tilde y_2))^2}{\epsilon^2(\tilde x_1-\tilde x_2)^2+\epsilon^2(\tilde y_1-\tilde y_2)^2}\right]\\\\
=&&\frac{1}{2\pi}\log\left[\frac{16}{\epsilon^2}\left(\frac{1}{(\tilde x_1-\tilde x_2)^2+(\tilde y_1-\tilde y_2)^2}+\epsilon\frac{2(\tilde y_1+\tilde y_2)}{(\tilde x_1-\tilde x_2)^2+(\tilde y_1-\tilde y_2)^2}+O(\epsilon^2)\right)\right]\\\\
=&&\frac{1}{2\pi}\log\frac{16}{\epsilon^2}+\frac{1}{2\pi}\log\frac{1}{(\tilde x_1-\tilde x_2)^2+(\tilde y_1-\tilde y_2)^2}+\frac{\epsilon}{\pi}(\tilde y_1+\tilde y_2)+O(\epsilon^2).

\end{array}
\end{equation}
In the above Hamiltonian function, the leading term is $-\frac{1}{2\pi}\log[(\tilde x_1-\tilde x_2)^2+(\tilde y_1-\tilde y_2)^2]$ (ignoring the constant term $\frac{1}{2\pi}\log\frac{16}{\epsilon^2}$), which means, to the leading order, the two vortices with strength $\Gamma_1$ and $\Gamma_2$ rotate on  a common circle. Let $\xi_1=(\tilde x_1-\tilde x_2)/2+i\tilde y_1$ and  $\xi_2=(\tilde x_2-\tilde x_1)/2+i\tilde y_2$ and suppose that initially $\xi_1(0)=i$ and $\xi_2(0)=-i$, then $\xi_1(t)=ie^{i\omega_0t}+O(\epsilon)$ and $\xi_2(t)=-ie^{i\omega_0t}+O(\epsilon)$, hence we have
$$
\tilde x_1(t)=-\sin(\omega_0t)+O(\epsilon),\;\tilde x_2(t)=\sin(\omega_0t)+O(\epsilon), 
$$
$$
\tilde y_1(t)=\cos(\omega_0t)+O(\epsilon),\;\tilde y_1(t)=-\cos(\omega_0t)+O(\epsilon).
$$

Next we write $\tilde H_3$ in a moving frame, here $\tilde H_3$ is 
\begin{equation}\label{eq:H3}
\begin{array}{rcl}
&&\tilde H_{3}(\tilde{x_3},\tilde{y_3})=\frac{1}{2\pi}\log\left[\frac{\epsilon^2(\tilde x_1-\tilde x_3)^2+\epsilon^2(\tilde y_1-\tilde y_3)^2}{\epsilon^2(\tilde x_1-\tilde x_3)^2+(2+\epsilon(\tilde y_1+\tilde y_3))^2}\frac{\epsilon^2(\tilde x_2-\tilde x_3)^2+\epsilon^2(\tilde y_2-\tilde y_3)^2}{\epsilon^2(\tilde x_2-\tilde x_3)^2+(2+\epsilon(\tilde y_2+\tilde y_3))^2}\right]\\
=&&\frac{1}{2\pi}\log\frac{\epsilon^4}{16}+\frac{1}{2\pi}\log\{[(\tilde x_1-\tilde x_3)^2+(\tilde y_1-\tilde y_3)^2][(\tilde x_2-\tilde x_3)^2+(\tilde y_2-\tilde y_3)^2]\}\\
&&-\frac{\epsilon}{8\pi}(\tilde y_1+\tilde y_2+2\tilde y_3)+O(\epsilon^2).
\end{array}
\end{equation}

First we consider a frame moving with speed $1/2(\dot{\tilde {x_1}}+\dot{\tilde {x_2}})$, the new Hamiltonian in this frame is 
\begin{equation}\label{eq:H3prime}
H_3'(x',y')=\tilde H_{3}(\tilde{x_3},\tilde{y_3})-\frac 12 y'(\frac{\partial \tilde H_{12}}{\partial \tilde y_1}+\frac{\partial \tilde H_{12}}{\partial \tilde y_2})
\end{equation}
where $x'=\tilde x_3-1/2(\dot{\tilde {x_1}}+\dot{\tilde {x_2}})$, $y'=\tilde y_3$.

Then in a frame rotating with speed $\omega_0$, i.e., in coordinate $(x,y)$ given by $x+iy=(x'+iy')e^{i\omega_0t}$, the Hamiltonian becomes
$$
H(x,y,t)=H'(x',y')+\frac 12 \omega_0(x^2+y^2).
$$
Because $x+iy=(x'\cos (\omega_0t)+y'\sin (\omega_0t))+i(y'\cos (\omega_0t)-x'\sin (\omega_0t))$, the main term in $H$ is
$$
H_0(x,y)=\frac {1}{2\pi}\log[x^2+(y-1)^2][x^2+(y+1)^2]+\frac{1}{2}\omega_0(x^2+y^2).
$$

By Equation (\ref{eq:H_12}), $\frac{\partial \tilde H_{12}}{\partial \tilde y_1}+\frac{\partial \tilde H_{12}}{\partial \tilde y_2}=\frac{1}{2\pi}\cdot 2\epsilon(1+1)=2\epsilon/\pi$, hence in (\ref{eq:H3prime}), $1/2y'(\frac{\partial \tilde H_{12}}{\partial \tilde y_1}+\frac{\partial \tilde H_{12}}{\partial \tilde y_2})=-\epsilon y'/\pi=-\epsilon \tilde y_3/\pi$. Combining with (\ref{eq:H3}), the term of order $\epsilon$ is
$$
-\frac{\epsilon}{\pi}\tilde y_3-\frac{\epsilon}{8\pi}(\tilde y_1+\tilde y_2+2\tilde y_3)=-\frac{5}{4\pi}\epsilon\tilde y_3=-\epsilon\frac{5}{4\pi}(x\sin(\omega_0t)+y\cos(\omega_0t)).
$$ 

In conclusion, the Hamiltonian in the moving frame can be written as 
$$
H(x,y,t)=H_0(x,y)+\epsilon H_1(x,y,t)+O(\epsilon^2),
$$
where $H_0(x,y)=\frac {1}{2\pi}\log[x^2+(y-1)^2][x^2+(y+1)^2]+\frac{1}{2}\omega_0(x^2+y^2)$ and $H_1(x,y,t)=-5/(4\pi)(x\sin(\omega_0t)+y\cos(\omega_0t))$.

Note that up to the order of $\epsilon$, this system coincides with the perturbed system of restricted 3 vortex rings studied in \cite{BaBa}. Hence they have the same Melnikov integral. By the computation of the Melnikov integral in \cite{BaBa}, we can get that this restricted 3-vortex system in the half-plane is also nonintegrable. Then by the similar continuous argument used in \cite{Kha}, one can conclude that the 3-vortex system in the  half-plane is nonintegrable.

\end{proof}

\begin{remark}
The above proof shows similarity of the restricted 3-vortex system in the half-plane and the restricted system of 3 coaxial vortex rings. Two systems coincide up to the order of $\epsilon$, so they have similar properties. This is also true for the 2-vortex system in the half-plane and the system of 2 vortex rings. One can use this similarity to explain the reason that an interesting phenomenon of vortex pair in the half-plane, called leapfrogging (see cf. \cite{PTT}), is also observed for vortex rings.
\end{remark}

\medskip
%%%%%%%%%%%%%%%%%%%%%%%%%%%%%%%%%

\subsection{Point vortex system in the quadrant}\label{subsec:quadrant}

The Green function in the upper right quadrant is 
$$
G_{\mathbb R_+\times\mathbb R_+}({\bf x},{\bf x'})=-\frac {1}{2\pi}\log\|{\bf x}-{\bf x'}\|+\left[\frac {1}{2\pi}\log\|{\bf x}-{\bf x'^*}\|+\frac {1}{2\pi}\log\|{\bf x}+{\bf x'^*}\|-\frac {1}{2\pi}\log\|{\bf x}+{\bf x'}\|\right],
$$
where ${\bf x'^*}=(x',-y')$ is the mirror image of ${\bf x'}=(x',y')$. Note that the first term of $G_{\mathbb R_+\times\mathbb R_+}({\bf x},{\bf x'})$ is the Green function in $\mathbb R^2$, hence the next three terms are responsible for the self-interaction:
$$
\hat\gamma_{\mathbb R_+\times\mathbb R_+}({\bf x})=\gamma_{\mathbb R_+\times\mathbb R_+}({\bf x},{\bf x})=\frac{1}{2\pi}\log\frac{2xy}{\sqrt{x^2+y^2}},
$$
where ${\bf x}=(x,y)$ with $x\geq 0$ $y\geq 0$ is a point in the upper right quadrant.

Put the expression of $G_{\mathbb R_+\times\mathbb R_+}$ and $\hat\gamma_{\mathbb R_+\times\mathbb R_+}$ into the Hamiltonian function (\ref{Ham}), one can obtain the Hamiltonian function for the $N$-vortex system in the upper right quadrant $\mathbb R_+\times\mathbb R_+$.

For the motion of a single vortex in the upper right quadrant, the Hamiltonian function is $H_{\mathbb R_+\times\mathbb R_+}(x,y)=\frac{\Gamma^2}{2\pi}\log\frac{2xy}{\sqrt{x^2+y^2}}$, where $(x,y)$ is the position of  this single vortex and $\Gamma$ is its strength.

Hence the Hamiltonian equations for a single vortex moving in the quadrant are
\begin{equation}\label{eq:quadrant_Ham}
\left\{\begin{array}{l}
\dot{x}=\;\;\frac{1}{2\pi}\left(\frac 1 y-\frac{y}{x^2+y^2}\right), \\\\
\dot{y}=-\frac{1}{2\pi}\left(\frac 1 x-\frac{x}{x^2+y^2}\right).
\end{array}\right.
\end{equation}
 Since the Hamiltonian is conserved under the flow of (\ref{eq:quadrant_Ham}), the trajectories of a single vortex satisfy
$$
\frac{4x^2y^2}{x^2+y^2}=C^2,
$$
where $C$ is an arbitrary constant. In terms of polar coordinates, this trajectory equation can be written as $r=\frac{C}{sin2\theta}$. (A version of this equation can be found in the classical book \cite{Lamb} by Lamb.)
Also, one can get the following proposition from this explicit expression of the trajectory.
\begin{proposition}
The single vortex moving in the upper right quadrant will not hit the boundary.
\end{proposition}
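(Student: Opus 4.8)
The plan is to argue directly from conservation of the Hamiltonian. Along the flow of \eqref{eq:quadrant_Ham} the quantity $H_{\mathbb R_+\times\mathbb R_+}(x,y)=\frac{\Gamma^2}{2\pi}\log\frac{2xy}{\sqrt{x^2+y^2}}$ is constant, equivalently $\frac{4x^2y^2}{x^2+y^2}\equiv C^2$ for a fixed constant $C$; since the vortex starts at an interior point $(x_0,y_0)$ with $x_0,y_0>0$ one has $C>0$. I will show that this single conservation law already forces the orbit to stay uniformly away from the boundary $\partial(\mathbb R_+\times\mathbb R_+)=\{x=0\}\cup\{y=0\}$ of the quadrant.

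\textbf{Main approach (explicit orbit).} Rewriting the orbit equation in polar coordinates $(r,\theta)$ gives $r\sin 2\theta=C$, so that $x=r\cos\theta=\frac{C}{2\sin\theta}$ and $y=r\sin\theta=\frac{C}{2\cos\theta}$ for $\theta\in(0,\pi/2)$. Hence on the entire orbit one has $x\ge C/2$ and $y\ge C/2$: the orbit is contained in the closed set $\{x\ge C/2,\ y\ge C/2\}$, whose distance from the boundary of the quadrant is at least $C/2>0$. Since the vortex moves continuously along this orbit and begins in the open quadrant (so $\theta(0)\in(0,\pi/2)$ and it sits on the branch with this particular value of $C$), it can never reach a point with $x=0$ or $y=0$.

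\textbf{Alternative (via the Hamiltonian).} Should one prefer not to invoke the explicit orbit, one can argue by contradiction: if the vortex reached the boundary at some finite time $T$, there would be a sequence $t_n\nearrow T$ with $\operatorname{dist}\!\big((x(t_n),y(t_n)),\,\partial(\mathbb R_+\times\mathbb R_+)\big)\to0$. The elementary inequality $\frac{x^2y^2}{x^2+y^2}\le\min\{x^2,y^2\}$ then gives $\frac{2xy}{\sqrt{x^2+y^2}}\to0$ along $t_n$ --- this covers approach to either open edge as well as to the corner $(0,0)$ --- so $H_{\mathbb R_+\times\mathbb R_+}\to-\infty$ along $t_n$, contradicting that $H_{\mathbb R_+\times\mathbb R_+}$ is conserved and finite (equal to its value at $t=0$).

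The argument is short, and I do not expect a genuine obstacle. The two points needing a little care are: making precise what ``hit the boundary'' means --- which I would handle by passing to a sequence rather than assuming convergence to a single boundary point, and by explicitly including the corner $(0,0)$; and justifying that the solution remains on the connected component of its level set lying in the open quadrant, which follows from continuity of the flow together with the fact that the level curves $\{r\sin 2\theta=C\}$ for distinct $C>0$ are pairwise disjoint. If desired, one may add the remark that the same estimates show the orbit does not escape to infinity in finite time either, so the single-vortex motion in the quadrant is in fact complete.
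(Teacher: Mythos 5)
Your proof is correct and follows the same route the paper intends: the paper derives the orbit equation $\tfrac{4x^2y^2}{x^2+y^2}=C^2$ (equivalently $r=C/\sin 2\theta$) from conservation of the Hamiltonian and simply asserts the proposition follows from it, whereas you supply the short missing computation ($x=C/(2\sin\theta)\ge C/2$, $y=C/(2\cos\theta)\ge C/2$) together with a clean Hamiltonian-blow-up alternative. Both of your arguments are valid and the details you add (handling the corner, staying on one level curve) are exactly the right ones to make the paper's one-line claim rigorous.
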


Similar to the previous cases, one can ask the following natural questions.
\begin{question}\label{q4}
Is it possible that $N$-vortex system in the upper right quadrant collides or hits the boundary in  finite time when $N>1$? 
\end{question}
\begin{question}\label{q5}
Prove the nonintegrable for the $N$-vortex system in the upper right quadrant  when $N>1$.
\end{question}

\medskip
%%%%%%%%%%%%%%%%%%%%%%%%%%%%%%%%%

\section{Coaxial circular vortex rings}\label{sec:rings}

Consider the axisymmetric Euler equations without swirl  in $\mathbb R^3=(z,r,\theta)$ (``without swirl'' means the $\theta$ component of the velocity field vanishes in the cylindrical coordinates $(z,r,\theta)$) with
velocity and pressure depending only on $r$ and $z$:
\begin{equation}\label{eq:axisym}
\left\{\begin{array}{l}
\partial_t u_z+(u_z\partial_z+u_r\partial_r)u_z=-\partial_z p, \\
\partial_t u_r+(u_z\partial_z+u_r\partial_r)u_r=-\partial_r p, \\
\partial_zu_z+\frac 1 r \partial_r(ru_r)=0,
\end{array}\right.
\end{equation}
where $u=(u_z,u_r):\mathbb R_{+}^2 \rightarrow \mathbb R^2$ is the velocity field in the half-plane $\mathbb R_{+}^2$, which satisfies the boundary condition $u_r=0$ on $r=0$. Also note that the third equation in (\ref{eq:axisym}) above is equivalent to the equation $\nabla\cdot(ru)=\partial_z(ru_z)+\partial_r(ru_r)=0$.

One can see that Equations (\ref{eq:axisym}) are the lake equations (\ref{eq:lake}) in variables $(z,r)$ with $b(z,r)=r$ in the half-plane. (See Appendix \ref{sec:lakeeq} for an introduction to the lake equations. )

By Equation (\ref{eq:stream}) for the stream function of the lake equations, the elliptic PDE for the corresponding stream function $\psi$ (by definition $\psi$ satisfies $\nabla^{\perp}\psi=ru$, i.e. $\partial_z\psi=ru_r,\;\partial_r\psi=-ru_z$) of the axisymmetric Euler equation is  
\begin{equation}\label{eq:ring_stream}
-\partial^2_z\psi-\partial^2_r\psi+\frac 1 r \partial_r\psi=r\omega.
\end{equation}

\begin{remark}
If we consider the stationary solution of (\ref{eq:axisym}), the vorticity equation is 
$$u\cdot\nabla\left(\frac{\omega}{r}\right)=0,$$
since $\nabla^{\perp}\psi=ru$, one can conclude that $\frac{\omega}{r}$ is a function of $\psi$, i.e., $\frac{\omega}{r}=f(\psi)$, plug this into the RHS of (\ref{eq:ring_stream}), we obtain
\begin{equation}\label{eq:GS}
-\partial^2_z\psi-\partial^2_r\psi+\frac 1 r \partial_r\psi=r^2f(\psi),
\end{equation}
this is the famous Grad-Shafranov equation with vanishing swirl. For the general Grad-Shafranov equation with non-vanishing swirl, one can see \cite{GrRu,Sha}.
\end{remark}
\medskip

Now we look at the $N$-vortex (ring) problem for (\ref{eq:axisym}). Solving the elliptic PDE (\ref{eq:ring_stream}) gives us the stream function $\psi(z,r)$ at $(z,r)$:
\begin{equation}\label{eq:stream_ring}
\psi(z,r)=\frac {1}{4\pi}\int_0^{+\infty}\int_{-\infty}^{+\infty} rr'\omega(z',r')\;dz'\;dr'\int_0^{2\pi}\frac{\cos\theta\;d\theta}{\sqrt{(z-z')^2+r^2+r'^2-2rr'\cos\theta}},
\end{equation}
and the Green function for (\ref{eq:ring_stream}) is
\begin{equation}\label{eq:green_ring}
G(z,r,z',r')=\frac {rr'}{4\pi}\int_0^{2\pi}\frac{\cos\theta\;d\theta}{\sqrt{(z-z')^2+r^2+r'^2-2rr'\cos\theta}},
\end{equation}
these formulas can be found in \cite{Dyso} and \cite{Lamb}.

The speed $V$ of a thin cored vortex ring of circulation $\Gamma=\int\omega dz dr$, ring radius $R$ and core radius $a$ ($a/R\ll1$) is (see cf. \cite{Thom} and \cite{Lamb})
\begin{equation}\label{ring_speed}
V=\frac{\Gamma}{4\pi R}\left[\log\frac{8R}{a}-\frac 14 +O\left(\frac a R\right)\right].
\end{equation}
Since $\frac R a\gg 1$, one can treat $\log\frac {8R}{a}$ as a constant of order $\log\frac{1}{\epsilon}$, where $\epsilon\sim a/R$, i.e.,  the ring moves along the axis with a speed proportional to its curvature, which is a well-known result for vortex filaments.

When considering $N$ coaxial circular vortex rings, we need to count the motion due to the self-interaction, together with the interaction of different vortex rings. The $i$-th ring's position at time $t$ is given by the coordinate $(Z_i(t),R_i(t))$ of the center of its core, $i=1,2,\dots,N$, the equations in $(Z_i(t),R_i(t))$ are 
\begin{equation}\label{eq:ring}
\left\{\begin{array}{l}
\dot{Z_i}=\;\;\frac{\Gamma_i}{4\pi R_i}\left[\log\frac{8R_i}{a_i}-\frac 14 \right]+\frac{1}{\Gamma_iR_i}\frac{\partial U}{\partial R_i}, \\\\
\dot{R_i}=-\frac{1}{\Gamma_iR_i}\frac{\partial U}{\partial Z_i},
\end{array}\right.
\end{equation}
where $U$ is the interaction energy between different vortex rings
$$
U=\frac {1}{2\pi}\sum_{j\neq i}\Gamma_i\Gamma_j G(Z_i,R_i,Z_j,R_j),
$$
where $G(Z_i,R_i,Z_j,R_j)$ is the Green function defined above in (\ref{eq:green_ring}). This system is usually referred to as  Dyson's model in the literature \cite{Mele}.

The system (\ref{eq:ring}) can also be rewritten in the following Hamiltonian form,
\begin{equation}\label{eq:ring_Ham}
\left\{\begin{array}{l}
\Gamma_iR_i\dot{Z_i}=\;\;\frac{\partial H}{\partial R_i}, \\\\
\Gamma_iR_i\dot{R_i}=-\frac{\partial H}{\partial Z_i},
\end{array}\right.
\end{equation}
where the Hamiltonian function is $H=\sum\limits_{i=1}^N\frac{\Gamma_i^2}{4\pi}R_i\left[\log\frac{8R_i}{a_i}-\frac 74 \right]+U$. By taking the canonical variables $p_i=\Gamma_iR_i^2$ and $q_i=Z_i$, the system (\ref{eq:ring_Ham}) can be written in a canonical form.

Also, the system (\ref{eq:ring_Ham}) has two independent first integrals: $\sum\limits_{i=1}^N\Gamma_iR_i^2$ and the Hamiltonian $H$, hence we have the following theorem.
\begin{theorem}\label{thm:integ} The N-vortex ring problem for $N\leq 2$ is integrable. 
\end{theorem}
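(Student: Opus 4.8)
The plan is to apply the Liouville--Arnold theorem, exactly as in the planar $N$-vortex case. After passing to the canonical variables $p_i=\Gamma_iR_i^2$, $q_i=Z_i$ introduced just before (\ref{eq:ring_Ham}), the phase space of the $N$-ring system is $2N$-dimensional, so it suffices to exhibit $N$ functionally independent first integrals that are pairwise in involution. For $N=1$ there is nothing to do: the system has one degree of freedom and the autonomous Hamiltonian $H$ is conserved, so every orbit is a connected component of a level curve of $H$ — concretely $R_1$ is constant and $Z_1$ advances at the Kelvin speed (\ref{ring_speed}).

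For $N=2$ I would take the pair $H$ and $P:=\Gamma_1R_1^2+\Gamma_2R_2^2=p_1+p_2$. First I would check that $P$ is a first integral: the flow $Z_i\mapsto Z_i+c$ (with $R_i$ unchanged) is the symplectic $\mathbb R$-action generated by the Hamiltonian $p_1+p_2$, hence its momentum map is $P$ itself, and $H$ is invariant under this action because the interaction energy $U$ enters only through the Green function (\ref{eq:green_ring}), which depends on $Z_1,Z_2$ solely via the difference $Z_1-Z_2$, while the self-interaction part of $H$ depends on $R_1,R_2$ alone. By Noether's theorem this gives $\{H,P\}=0$ together with the conservation of $P$. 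Second, $H$ and $P$ are functionally independent on a dense open set: $dP=2(\Gamma_1R_1\,dR_1+\Gamma_2R_2\,dR_2)$ carries no $dZ_i$ component, whereas $dH$ does, since $\partial H/\partial Z_i=-\Gamma_iR_i\dot R_i$ by (\ref{eq:ring_Ham}) and these vanish only at relative equilibria. Therefore $dH\wedge dP\neq 0$ away from a nowhere-dense set, and the Liouville--Arnold theorem yields complete integrability of the $2$-ring system.

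The computations involved are routine; there is no serious obstacle. The mild point requiring a word of care is the verification of $\{H,P\}=0$, which is handled cleanly by the symmetry (Noether) argument above rather than by a direct bracket computation, and the observation that the two places where the argument could degenerate — the singular locus of $H$ on the axis $R_i=0$ and at ring collisions, and the relative equilibria where $dH\wedge dP$ drops rank — are all nowhere dense and hence irrelevant to integrability in the Liouville sense. This is the same situation as for the planar $N$-vortex problem with $N\le 3$.
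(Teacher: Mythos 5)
Your proposal is correct and follows essentially the same route as the paper: the paper justifies the theorem by noting that $\sum_i\Gamma_iR_i^2$ and $H$ are two independent first integrals in involution (conservation of the first coming from translation invariance of $H$ in the $Z$-direction), which suffices for Liouville integrability when $N\le 2$. Your write-up merely makes explicit the momentum-map/Noether verification and the functional-independence check that the paper leaves implicit.
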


\begin{remark}
The nonintegrability of 3 coaxial vortex rings is proved in \cite{BaBa}. Up to the order $\epsilon$, the perturbed system in that proof appears to be the same as the perturbed system we used in Section \ref{subsec:half_plane}. This shows an interesting relation  of point vortex system in the half-plane and coaxial vortex rings.
\end{remark}

For higher-dimensional generalization of vortex ring, here are some natural open questions.
\begin{question}\label{q6}
One can consider a more general problem
of vortex spheres of codimension 2 in $\mathbb R^n$, centered on one axis, i.e., coaxial vortex spheres. The
corresponding Euler equation should reduce to the lake equations with
$b(z,r)=r^{n-2}$, where $r$ is the distance to the common axis. Prove that: 

1. a single vortex sphere travels along the axis with a speed proportional to its curvature,

2. integrability for 2 vortex spheres and nonintegrability for more than 2 vortex spheres,

3. Leapfrogging motion can happen for a pair of spheres.
\end{question}

\begin{question}\label{q6}
\footnote{The author would like to thank the anonymous reviewer for suggesting this question.}
More generally, consider the lake equations (\ref{eq:lake}) in variables $(z,r)$ with $b(z,r)=r^{\alpha},\;\alpha>0$ in the half-plane. Prove that: 

1. a single vortex travels parallel to the $z$-axis with a speed proportional to $1/r$,

2. integrability for the 2-vortex system and nonintegrability for more than 2 vortices,

3. leapfrogging motion can happen for a pair of vortices.
\end{question}

\medskip

%%%%%%%%%%%%%%%%%%%%%%%%%%%%%%%%%%%%

\section{Sphere product vortex membranes}\label{sec:SphProd}

Assume that the Euler equation in $\mathbb R^{m+l+2}$ is sphere product $\mathbb S^m\times\mathbb S^l$-symmetric, i.e., the velocity $v$ and the pressure $p$ in the Euler equation are functions of the distances $(x,y)$ to the origin: $x=|X|, y=|Y|$ for $X\in \mathbb R^{m+1}, \, Y\in \mathbb R^{l+1}$, 
then the Euler equation can be written in the form of the lake equations (\ref{eq:lake}) with $b(x,y):=x^m y^l$.  

Now consider the motion of a sphere product vortex membrane under skew-mean-curvature flow (see Appendix \ref{sec:SMCF}). Here a sphere product vortex membrane is a singular vorticity $\xi=\delta_\Sigma$
supported on $\Sigma=\mathbb S^m(a)\times\mathbb S^l(b)\subset\mathbb R^{m+1}\times\mathbb R^{l+1}=\mathbb R^{m+l+2}$. It also can be regarded as the motion of a point vortex 
$\delta_{(a,b)}$ for $(a,b)\in \mathbb R_+\times\mathbb R_+$ for the corresponding lake equation.
The next theorem provides explicit solutions of point-vortex type, both existing forever or collapsing in finite time, depending on the membrane structure and dimension.

\begin{theorem}\label{prop:Sphere_product}(\cite{KhYa})
Let $F:\Sigma=\mathbb S^m(a)\times\mathbb S^l(b)\hookrightarrow\mathbb R^{m+1}\times\mathbb R^{l+1}=\mathbb R^{m+l+2}$
be the product of two spheres of radiuses $a$ and $b$. Then the evolution  $F_t$ of this surface $\Sigma$
in the binormal flow is the product of spheres 
$F_t(\Sigma)=\mathbb S^m(a(t))\times \mathbb S^l(b(t))$ at any $t$ with radiuses changing monotonically according to the ODE system:
%$$a(t)=(a/m)\lambda^{-t}\;\text{and}\; b(t)=(b/l)\lambda^t$$ with $\lambda:=e^{ml/(ab)}$. 
\begin{equation}\label{eq:CliffordODE2}
\left\{\begin{array}{rcl}
\dot a&=&-l/b,\\
\dot b&=&+m/a.
\end{array}\right.
\end{equation}
For $0<m<l$ the corresponding solution $F_t$ exists only for finite time and collapses 
at $t=a(0)b(0)/(l-m)$. 
\end{theorem}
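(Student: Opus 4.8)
\medskip

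The plan is to restrict the binormal flow to the finite-dimensional family of $SO(m+1)\times SO(l+1)$-symmetric membranes, reduce it there to a planar ODE system for the two radii, and then integrate that system in closed form. Recall (Appendix \ref{sec:SMCF}) that the skew-mean-curvature flow moves a point $p$ of a codimension-two submanifold $\Sigma\subset\mathbb R^{m+l+2}$ with velocity $J\mathbf H(p)$, where $\mathbf H=\Delta_\Sigma x$ is the (trace) mean curvature vector and $J$ is rotation by a right angle in the two-dimensional normal plane $N_p\Sigma$, oriented as in \cite{KhYa}. For $\Sigma=\mathbb S^m(a)\times\mathbb S^l(b)$ and $p=(a\omega,b\eta)$ with $|\omega|=|\eta|=1$, the normal plane is spanned by the unit vectors $n_1=(\omega,0)$ and $n_2=(0,\eta)$; since $\Delta_\Sigma$ splits over the two factors and the coordinate functions on a round sphere $\mathbb S^k(\rho)$ are $\Delta$-eigenfunctions with eigenvalue $-k/\rho^2$, one gets $\mathbf H(p)=-\tfrac m a\,n_1-\tfrac l b\,n_2$. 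Applying $J$ (which sends $n_1\mapsto-n_2$, $n_2\mapsto n_1$ in the orientation of \cite{KhYa}) yields $J\mathbf H(p)=-\tfrac l b\,(\omega,0)+\tfrac m a\,(0,\eta)$.

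The crucial observation is that this velocity field is again purely radial in each factor, with coefficients depending only on $a$ and $b$. Hence the ansatz $F_t(\omega,\eta)=(a(t)\,\omega,\,b(t)\,\eta)$ solves the flow equation $\partial_t F_t=J\mathbf H\circ F_t$ exactly when $\dot a=-l/b$ and $\dot b=+m/a$, that is, system \eqref{eq:CliffordODE2}. Moreover, the ambient structure is $SO(m+1)\times SO(l+1)$-invariant, so the binormal flow is equivariant under this group and therefore preserves the class of $SO(m+1)\times SO(l+1)$-invariant membranes; being compact, connected, and of dimension $m+l$, such a membrane is exactly a product $\mathbb S^m(a)\times\mathbb S^l(b)$. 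By uniqueness of the flow, the solution issuing from $\mathbb S^m(a(0))\times\mathbb S^l(b(0))$ has this form for as long as it exists, with radii governed by \eqref{eq:CliffordODE2}.

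It remains to analyze \eqref{eq:CliffordODE2}. Differentiating, $\tfrac{d}{dt}(a^m b^l)=m a^{m-1}\dot a\,b^l+l\,a^m b^{l-1}\dot b=-ml\,a^{m-1}b^{l-1}+ml\,a^{m-1}b^{l-1}=0$, so $a^m b^l\equiv K:=a(0)^m b(0)^l$ (equivalently, $m\log a+l\log b$ is a first integral of \eqref{eq:CliffordODE2}); also $\dot a<0$ and $\dot b>0$, so $a$ decreases and $b$ increases monotonically. Substituting $b=K^{1/l}a^{-m/l}$ turns $\dot a=-l/b$ into the separable equation $\dot a=-l\,K^{-1/l}a^{m/l}$, whose solution satisfies $a(t)^{(l-m)/l}=a(0)^{(l-m)/l}-(l-m)K^{-1/l}\,t$. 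For $0<m<l$ the exponent $(l-m)/l$ is positive, so $a$ reaches $0$ at the finite time $t^{*}=a(0)^{(l-m)/l}K^{1/l}/(l-m)=a(0)b(0)/(l-m)$, using $K^{1/l}=a(0)^{m/l}b(0)$; at that instant $b=K^{1/l}a^{-m/l}\to\infty$ and the solution cannot be continued, which is the asserted collapse.

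The step I expect to be the main obstacle is the differential-geometric reduction: computing $\mathbf H$ for the product membrane, exhibiting the normal frame, and fixing the orientation of the normal plane so that $J$ reproduces the precise signs in \eqref{eq:CliffordODE2} consistently with the conventions of \cite{KhYa} (reversing that orientation reverses time and interchanges the two spheres, hence the roles of $m$ and $l$). Once the flow is correctly reduced to \eqref{eq:CliffordODE2}, the remaining work — the invariant $a^m b^l$, the monotonicity, and the explicit collapse time — is a routine ODE computation.
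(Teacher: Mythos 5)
Your proposal is correct and follows essentially the same route as the paper: compute the mean curvature vector $\mathbf H=-\frac ma\,{\bf n}_1-\frac lb\,{\bf n}_2$ of the sphere product, rotate it in the normal plane to get the binormal velocity, read off the ODE system \eqref{eq:CliffordODE2}, and integrate using the conserved quantity $a^mb^l$ (the paper phrases this as the Hamiltonian $\ln(a^mb^l)$, i.e.\ the log-volume). Your added details --- the Laplacian-eigenfunction derivation of $\mathbf H$, the equivariance argument for why the product form is preserved, and the explicit separable integration giving $t^*=a(0)b(0)/(l-m)$ --- fill in steps the paper leaves implicit or defers to Corollary \ref{cor:2}, and the sign bookkeeping for $J$ matches the paper's net velocity.
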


\begin{corollary}\label{cor:2}
In the general case of sphere products $\Sigma=\mathbb S^m(a)\times\mathbb S^l(b)$ the radiuses of $F_t(\Sigma)$
change as follows: $a(t)=ae^{-lt/(ab)}$ and $b(t)=be^{mt/(ab)}$ for $m=l$ and
$$
a(t)=a^{m/(m-l)}\left(a-(l-m)b^{-1}t\right)^{l/({l-m})}\;\text{and}\;\; b(t)=b^{l/(l-m)}\left(b+({m-l})a^{-1}t\right)^{m/(m-l)},
$$
for  $m\neq l$ and initial conditions $a(0)=a$, $b(0)=b$. 
\end{corollary}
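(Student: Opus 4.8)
\emph{Overview.} The plan is to read Corollary~\ref{cor:2} as nothing more than the explicit solution of the ODE system \eqref{eq:CliffordODE2} furnished by Theorem~\ref{prop:Sphere_product}: all of the geometry is already packaged in that theorem, so what is left is an elementary integration.

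\emph{Step 1: a first integral.} Differentiating $a(t)^m b(t)^l$ along \eqref{eq:CliffordODE2},
\[
\frac{d}{dt}\bigl(a^m b^l\bigr)=a^{m-1}b^{l-1}\bigl(m b\,\dot a+l a\,\dot b\bigr)=a^{m-1}b^{l-1}\bigl(m b(-l/b)+l a(m/a)\bigr)=0,
\]
hence $a(t)^m b(t)^l\equiv K$ is conserved, with $K$ fixed by the initial radii $a,b$.

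\emph{Step 2: decouple and separate variables.} Using $b=K^{1/l}a^{-m/l}$ in the equation $\dot a=-l/b$ reduces the system to the separable first-order equation $\dot a=-lK^{-1/l}a^{m/l}$. Integrating $a^{-m/l}\,da$ and imposing $a(0)=a$, I distinguish two cases. When $m=l$ the integral is logarithmic and gives $a(t)=a\,e^{-lK^{-1/l}t}$; since here $K=(ab)^l$, this is $a(t)=a\,e^{-lt/(ab)}$, whence $b(t)=K^{1/l}/a(t)=b\,e^{mt/(ab)}$. When $m\neq l$ the integral is $\tfrac{l}{l-m}a^{(l-m)/l}$, so
\[
a(t)^{(l-m)/l}=a^{(l-m)/l}-(l-m)K^{-1/l}t ;
\]
substituting $K^{-1/l}=a^{-m/l}b^{-1}$, factoring $a^{-m/l}$ out of the right-hand side, and raising to the power $l/(l-m)$ yields exactly $a(t)=a^{m/(m-l)}\bigl(a-(l-m)b^{-1}t\bigr)^{l/(l-m)}$. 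The formula for $b(t)$ then follows either from the mirror-image computation (express $a$ through $b$ via $K$ and integrate $\dot b=m/a=mK^{-1/m}b^{l/m}$) or directly from $b(t)=K^{1/l}a(t)^{-m/l}$.

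\emph{Step 3: consistency.} There is no real obstacle; the one delicate point is keeping track of the fractional exponents and signs when massaging the $m\neq l$ solution into the stated form. I would close by checking internal consistency: in the collapsing regime $0<m<l$ the factor $a-(l-m)b^{-1}t$ vanishes precisely at $t=ab/(l-m)$, matching the collapse time in Theorem~\ref{prop:Sphere_product}, and one sees that $a(t)^m b(t)^l$ stays equal to $K$ along the solution, so that $a(t)\to0$ is accompanied by $b(t)\to+\infty$ at the rate forced by the conservation law.
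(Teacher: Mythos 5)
Your proposal is correct and follows essentially the same route as the paper: the paper likewise observes that the system \eqref{eq:CliffordODE2} is Hamiltonian with conserved quantity $\mathcal H(a,b)=\ln(a^mb^l)$ and then rewrites it as a single first-order ODE to integrate explicitly. Your write-up simply carries out the separation of variables in detail, and the consistency check against the collapse time $ab/(l-m)$ is a nice confirmation.
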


\begin{remark}
This explicit solution might be useful to study the Euler singularity formation in higher dimensions, since the skew-mean-curvature flow is the localized induction approximation of the Euler equation.
The simplest case satisfying  the collapse condition $0<m<l$  is $m=1, l=2$ for  $ \mathbb S^1(a)\times\mathbb S^2(b)\subset \mathbb R^5$.  
Note also that the odd-dimensional Euler equation has fewer invariants (generalized helicities) than the even-dimensional one
(generalized enstrophies), see \cite{ArKh}. The existence of many invariants helps control solutions, so it indicates that 
the first example with a finite life-span occurs in the 5 dimensional case.
\end{remark}

\begin{proof}
For a point $q=(q_1,q_2)\in \mathbb S^m(a)\times\mathbb S^l(b)\hookrightarrow\mathbb R^{m+1}\times\mathbb R^{l+1}$, the  mean curvature of the sphere product is the vector
${\bf H}=-\frac ma {\bf n}_1-\frac lb {\bf n}_2$ (divided by the total dimension $m+l$ of the product, which we omit),
and the skew-mean-curvature vector is $-J{\bf H}=-\frac lb {\bf n}_1+\frac ma {\bf n}_2$, where ${\bf n}_1$ and ${\bf n}_2$ be the outer unit normal vectors to the corresponding spheres at the points $q_1$ and $q_2$ respectively.

The explicit form of the $-J{\bf H}$ vectors implies  the system of ODEs \eqref{eq:CliffordODE2} on the evolution of radiuses. Rewriting this as one first order ODE one can solve this explicitly, as in Corollary~\ref{cor:2}.
The system (\ref{eq:CliffordODE2}) is Hamiltonian on the $(a,b)$-plane with the Hamiltonian function given by $\mathcal H(a,b):=\ln(a^mb^l)$, which is the logarithm of the volume of the product of two spheres: ${\rm vol}(\Sigma)=C \,a^mb^l$.
(Note that the invariance of  this Hamiltonian is consistent with conservation of the volume 
of $\Sigma$, as the latter is the Hamiltonian  of the skew-mean-curvature flow.) 
\end{proof}

One can compare the binormal equation (\ref{eq:CliffordODE2}) of the sphere product vortex membrane with the
motion of a point vortex in the quadrant described by equations (\ref{eq:quadrant_Ham}).  Here are the graphs of their typical orbits.

\begin{figure}[H]
\centering
\includegraphics[width = 0.5\textwidth]{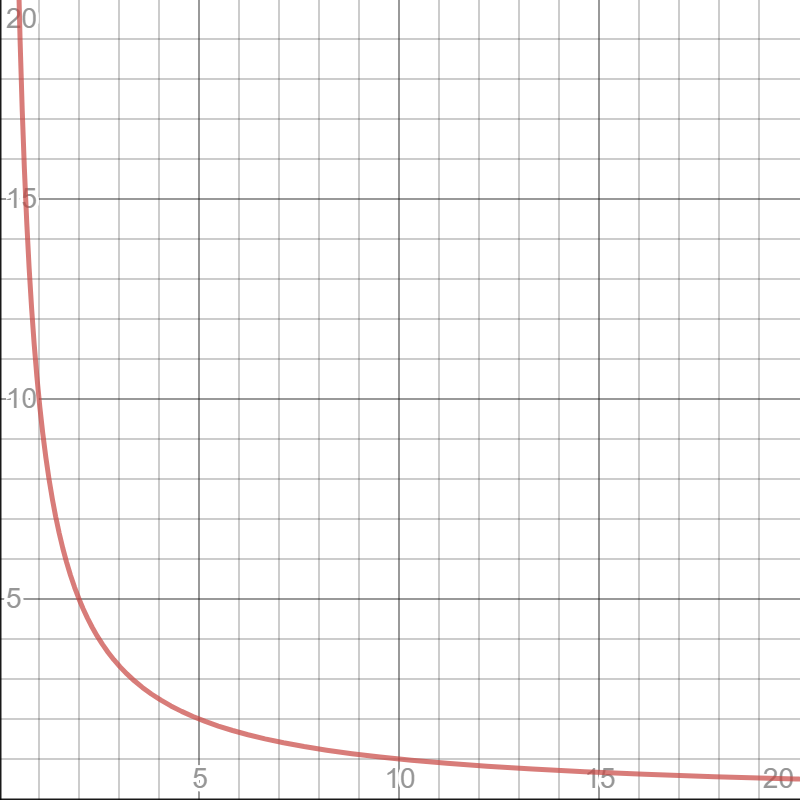}
\caption{An orbit of sphere product membrane under SMCF}
\end{figure}

\begin{figure}[H]
\centering
\includegraphics[width = 0.5\textwidth]{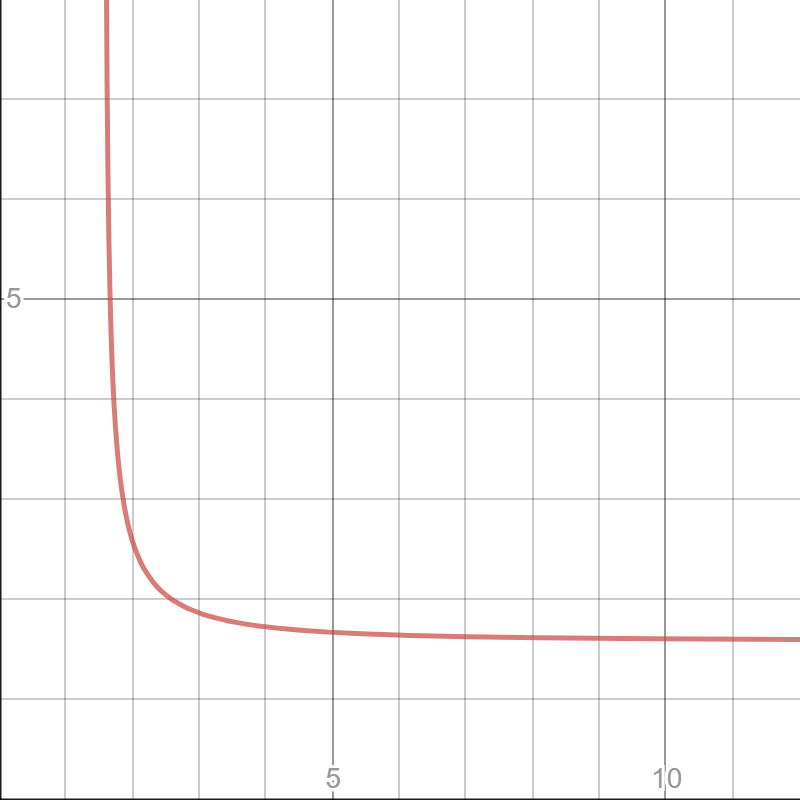}
\caption{An orbit of point vortex in the quadrant}
\end{figure}

Another difference is that the solution of (\ref{eq:quadrant_Ham}) exists for all time, while solutions of (\ref{eq:CliffordODE2}) exist for all times for $m=l$, but for $0<m<l$ it exists for a finite time only.

\medskip

%%%%%%%%%%%%%%%%%%%%%%%%%%%%%%%%%%%%

\begin{appendices}
\section{The lake equations}\label{sec:lakeeq}

Let $D\subset\mathbb R^2$ be a planar domain and $b:D\rightarrow (0,+\infty)$ be a positive depth function, then 
the velocity field $v(t,\cdot)=(v_1,v_2): D \rightarrow \mathbb R^2$ and the surface height function $h(t,\cdot): D\rightarrow \mathbb R$ are governed by the lake equations:
\begin{equation}\label{eq:lake}
\left\{\begin{array}{l}
\partial_t v+(v,\nabla)v=-\nabla h, \;\;\; \text{on}\;D,\\
\nabla\cdot (bv)=0,\qquad\qquad\quad \text{on}\;D,\\
v\cdot {\bf n}=0,\qquad\qquad\qquad\text{on}\;\partial D,
\end{array}\right.
\end{equation}
where {\bf n} denotes the outgoing normal vector at the boundary $\partial D$ of the domain. See \cite{LOT} for more details about the lake equations.

Now take the vorticity function be $\omega=\partial_1v_2-\partial_2v_1$, we obtain the vorticity formulation of the lake equations (\ref{eq:lake}):
\begin{equation}\label{eq:vorlake}
\partial_t\left(\frac{\omega}{b}\right)+(v,\nabla)\left(\frac{\omega}{b}\right)=0.
\end{equation}
Then take $\psi$ be the stream function, i.e. $\nabla^{\perp}\psi=bv$, here $\nabla^{\perp}$ stands for $(\partial_2,-\partial_1)$, therefore we have 
$$
\omega=\partial_1v_2-\partial_2v_1=\partial_1(-\partial_1\psi/b)-\partial_2(\partial_2\psi/b),
$$ 
and simplifying this equation we obtain the following elliptic PDE:
\begin{equation}\label{eq:stream}
-\frac{\Delta \psi}{b}-\nabla\psi\cdot\nabla\left(\frac 1b\right)=\omega.
\end{equation}

%%%%%%%%%%%%%%%%%%%%%%%%%%%%%%%%%%%%%%%%%%%%%%%%%%%%%%%%
\medskip

\section{Skew-mean-curvature flow}\label{sec:SMCF}

Skew-mean-curvature (or binormal) flows are localized approximations of the incompressible Euler equation in $\mathbb R^{n+2}$ with a singular vorticity profile supported on the membrane $\Sigma^n$, for codimension 2 vortex membranes in $\mathbb R^4$ see  \cite{Sh} and  in any dimension see \cite{HaVi, Jer, Kh}. Hence their blow-up/global existence results could shed some light on the motion of fluid flows themselves.

The skew-mean-curvature flow is defined as follows:

\begin{definition}\label{def:smcf}
Let $\Sigma^n\subset\mathbb R^{n+2}$ be a codimension  2  compact oriented submanifold (membrane) in the Euclidean space $\mathbb R^{n+2}$, the {\it skew-mean-curvature (or binormal) flow} is described by the equation:
\begin{equation}\label{eq:smcf}
\partial_t q =-J({\bf H}(q)),
\end{equation}
where $q\in \Sigma$, ${\bf H}(q)$ is the mean curvature vector at the point $q$ on $\Sigma$, $J$ stands for the operator of positive $\pi/2$ rotation in the two-dimensional normal space $N_q\Sigma$ to $\Sigma$ at $q$.

The skew-mean-curvature flow  (\ref{eq:smcf}) is a natural generalization of the binormal equation:
the mean curvature vector of a curve $\gamma$ in $\mathbb R^3$ at a point is ${\bf H}=\kappa\,\bf{n}$, where $\kappa$ is the curvature of the curve $\gamma$ at that point, hence the skew-mean-curvature flow becomes:
$\partial_t\gamma=-J(\kappa\,\bf{n}) = \kappa\,\bf{b}$, which is  the binormal equation for filaments. 
\end{definition}

On the  infinite-dimensional space  $\mathfrak M$ of codimension 2 membranes, there exists a natural symplectic structure:
\begin{equation}\label{eq:MW2}
\omega^{MV}(\Sigma)(u,v)=\int_{\Sigma}i_ui_v\mu,
\end{equation}
where $u$ and $v$ are two vector fields attached to the membrane $\Sigma\in\mathfrak M$, 
and $\mu$ is the volume form in $\mathbb R^{n+2}$. This is called the Marsden-Weinstein symplectic structure.

Let the functional ${\rm vol}(\Sigma)$ on the  space $\mathfrak M$ be  the $n$-dimensional volume of a compact $n$-dimensional membrane $\Sigma^n\subset \mathbb R^{n+2}$. Then
the skew-mean-curvature flow \eqref{eq:smcf} 
is the Hamiltonian flow on the membrane space $\mathfrak M$ equipped with the
Marsden-Weinstein structure whose Hamiltonian is given by the volume functional  ${\rm vol}$.

\end{appendices}

\end{document}